\documentclass[sigconf,natbib=false]{acmart}

\copyrightyear{2022}
\acmYear{2022}
\setcopyright{acmlicensed}
\acmConference[ISSAC '22] {Proceedings of the 2022 International Symposium on Symbolic and Algebraic Computation}{July 4--7, 2022}{Villeneuve-d'Ascq, France.}
\acmBooktitle{Proceedings of the 2022 Int'l Symposium on Symbolic and Algebraic Computation (ISSAC '22), July 4--7, 2022, Villeneuve-d'Ascq, France}
\acmPrice{15.00}
\acmISBN{978-1-4503-8688-3/22/07}
\acmDOI{10.1145/3476446.3535503}

\keywords{Linear differential equations, Monodromy, Rigorous numerics}
\thanks{%
  This work was supported in part by ANR grants
  ANR-19-CE40-0018 DeRerumNatura
  and ANR-20-CE48-0014-02 NuSCAP}

\begin{CCSXML}
<ccs2012>
<concept>
<concept_id>10010147.10010148.10010149.10010154</concept_id>
<concept_desc>Computing methodologies~Hybrid symbolic-numeric methods</concept_desc>
<concept_significance>500</concept_significance>
</concept>
</ccs2012>
\end{CCSXML}

\ccsdesc[500]{Computing methodologies~Hybrid symbolic-numeric methods}%

\settopmatter{printacmref=true}

\usepackage[
  style=ACM-Reference-Format,
  backend=bibtex,
  date=year,
  isbn=false,
  opcittracker=true,
  doi=true,url=false,
]{biblatex}
\addbibresource{biblio.bib}

\renewbibmacro*{year}{%
  \iffieldundef{year}%
    {}
    {\printfield{year}}%
}
\AtEveryBibitem{
  \clearlist{language}
}

\usepackage{bm}
\usepackage{microtype}

\usepackage{booktabs}

\usepackage{setspace}
\usepackage[linesnumbered,ruled,vlined]{algorithm2e}
\SetKwSty{textrm}
\SetProgSty{textrm}
\SetCommentSty{textrm}
\SetKwFor{For}{for}{\!:}{XXX}
\SetKwIF{Try}{Catch}{}{try:}{}{catch:}{}{XXX}
\SetKw{Raise}{raise}

\usepackage{placeins}
\usepackage{paralist}
\usepackage{enumitem}

\clubpenalty=10000
\widowpenalty = 10000
\allowdisplaybreaks[4]
\setlength{\textfloatsep}{1em}

\def\<#1>{\langle#1\rangle}
\newcommand{\balls}{\mathbb{C}_{\bullet}}
\newcommand\bC{\mathbb{C}}
\newcommand\bK{\mathbb{K}}
\newcommand\bL{\mathbb{L}}
\newcommand\bP{\mathbb{P}}
\newcommand\bQ{\mathbb{Q}}
\newcommand\bR{\mathbb{R}}
\newcommand\bZ{\mathbb{Z}}

\newcommand\QQbar{\overline\bQ}

\newcommand\cA{\mathcal{A}}

\newcommand\cG{\mathcal{G}}

\newcommand\leftCr{\bC^{r \times 1}}
\newcommand\rightCr{\bC^{1 \times r}}

\newcommand\id{\operatorname{id}}

\DeclareMathOperator\myspan{span}
\DeclareMathOperator\rad{rad}
\DeclareMathOperator\ord{ord}
\DeclareMathOperator\aut{aut}
\DeclareMathOperator\Gal{Gal}
\DeclareMathOperator\Wronskian{Wr}

\AtEndPreamble{
  \theoremstyle{acmdefinition}
  \newtheorem{remark}[theorem]{Remark}
  \newtheorem{convention}[theorem]{Convention}
}

\begin{document}
\sloppy

\fancyhead{}

\title{Symbolic-Numeric Factorization of Differential Operators}

\author{Frédéric Chyzak}
\affiliation{
  \institution{Inria}
  \postcode{91200}
  \city{Palaiseau}
  \country{France}}
\email{frederic.chyzak@inria.fr}

\author{Alexandre Goyer}
\affiliation{
  \institution{Inria}
  \postcode{91200}
  \city{Palaiseau}
  \country{France}}
\email{alexandre.goyer@inria.fr}

\author{Marc Mezzarobba}
\affiliation{
  \institution{LIX, CNRS, École polytechnique, Institut polytechnique de Paris}
  \postcode{91200}
  \city{Palaiseau}
  \country{France}}
\email{marc@mezzarobba.net}

\begin{abstract}
We present a symbolic-numeric Las Vegas algorithm for factoring Fuchsian ordinary differential operators with rational function coefficients.
The new algorithm combines ideas of van Hoeij's ‘‘local-to-global'' method and of the ‘‘analytic'' approach proposed by van der Hoeven.
It essentially reduces to the former in ‘‘easy'' cases where the local-to-global method succeeds, and to an optimized variant of the latter in the ‘‘hardest'' cases, while handling intermediate cases more efficiently than both.
\end{abstract}

\maketitle

\section{Introduction}

\subsubsection*{Problem}

Can numerical integration of differential equations help finding exact solutions?
The present paper revisits one aspect of this question.
To a linear ordinary differential equation
\[ y^{(r)} (x) + a_{r-1}(x) y^{(r-1)}(x) + \cdots + a_0 (x) y (x) = 0, \]
one classically associates the differential operator
\[ L = \partial^r + a_{r-1} \partial^{r-1} + \cdots + a_1 \partial + a_0, \]
where $\partial = \mathrm d / \mathrm d x$ is the standard derivation.
Linear differential operators with coefficients $a_i \in
\bK (x)$ for some number field~$\bK \subset \bC$ can be viewed as skew polynomials
in~$\partial$ over $\bK (x)$, subject to the relation $\partial
x = x \partial + 1$. They form a skew Euclidean ring which we denote by
$\bK (x) \<\partial>$.

An operator $L_1$ is said to be a right-hand factor of $L \in \bK(x)\<\partial>$
if there exists an operator~$L_2$ such that $L = L_2
L_1$; an operator with no proper right-hand factor is called irreducible.
Factoring operators is helpful in understanding their solutions. More
precisely, when $L = L_2 L_1$, the solution space of~$L_1$ is contained in
that of~$L$, whereas solutions $w$ of~$L_2$ give rise to solutions~$y$ of~$L$
via inhomogeneous equations of the form $L_1 (y) = w$.

It is well-known that factorization in this setting is not unique.
For instance, one has $\partial^2 = (\partial + 1 / (x + \alpha))  (\partial - 1 / (x + \alpha))$ for any~$\alpha$,
expressing that
the solutions $y(x) = x + \alpha$ of all first-order equations $(x + \alpha) y' (x) = y (x)$
are gathered as solutions of $y'' (x) = 0$.

In the present paper, we are interested in the problem of finding {\emph{one}}
factorization of an operator $L \in \bQ (x) \<\partial>$
(or, more generally, $L \in \bK (x) \<\partial>$)
as a product $L = L_{\ell}\cdots L_1$
of irreducible operators $L_i \in \QQbar (x) \<\partial>$.
Since, once we have written $L = L_2 L_1$, we can
recursively try to factor $L_1$~and~$L_2$, we will focus on the problem of
finding any proper right-hand factor.

The problem of factoring differential operators can be rephrased using elementary
differential Galois theory
\cite{MS-16-DSS,vdPS-03-GTL}. 
The basic fact here is that the
solution space~$V$ of the operator~$L$ is naturally equipped with
an action of the {\emph{differential Galois group}}~$G$ of~$L$, and a subspace
of~$V$ is the space of solutions of a right-hand factor if and only if it is
invariant under this action. In other words, right-hand factors correspond
bijectively to submodules of~$V$ viewed as module over $\bC[G]$.
This point of view allows one to study factorizations of differential operators
using the general theory of modules over finite-dimensional associative algebras~\cite[\emph{e.g.},][]{Pierce1982}.
This is (explicitly or not) the philosophy of many of the algorithms for factoring operators or solving related problems.

Computing the differential Galois group is notoriously difficult~\cite[\emph{e.g.},][]{Sun-19-CCG}.
However, as a linear algebraic group it admits a finite system of generators
that can be described explicitly using values of analytic solutions of
differential equations.
This property suggests a symbolic-numeric approach to the factorization problem.
The idea is to compute generators of the Galois group by solving the equations numerically, then search for a common invariant subspace and use it to reconstruct a candidate factor, and finally check one's guess by exact division.

\vspace*{-1ex}
\subsubsection*{Previous work}

The standard general algorithm for factoring differential operators goes back
to Beke~{\cite{Bek-94-DIH}} at the end of the 19th century, with modern
improvements due to Schwarz~{\cite{Sch-89-FAL}}, Bronstein~{\cite{Bro-94-IAF}}
and Tsarev~{\cite{Tsa-94-PAD}}. Beke's method and its modern variants reduce
the problem of finding a right-hand factor of order~$k$ of~$L$ to that of
finding a first-order right-hand factor of the $k$th exterior power of~$L$,
which they do by combining ``local first-order factors'' at each of the
singular points of~$L$.
This strategy can be slow even in relatively simple cases for a number of reasons, including the size of exterior powers, the need to work over algebraic extensions of the constants, and a possible combinatorial explosion in the recombination phase~\cite{vHo-97-FDO}.

The only worst-case complexity bound we are aware of is due to
Grigoriev~{\cite{Gri-90-CFC}}, also using an improved variant of Beke's
method.
In the special case of a monic $L \in \bQ [x] \<\partial>$ of order~$r$ and degree~$d$,
it states that $L$~can be factored in time polynomial in $(\delta rd)^{r^4}$ where $\delta$ is the maximum degree of $L_2$ in any factorization $L = L_1 L_2 L_3$ with monic $L_2, L_3$.
Grigoriev's worst-case bound for~$\delta$ is more than doubly exponential in~$r$
(see Bostan \emph{et al.}~\cite{BRS-19-EDB} for more on this). 

More practical algorithms are based on two main ideas.
One, the \emph{eigenring method}, introduced by Singer~{\cite{Sin-96-TRL}} and improved by van Hoeij~{\cite{vHo-96-RSM}}, applies mainly to operators that decompose as a least common left multiple of two right-hand factors.
The other  is a local-to-global approach due to van Hoeij~\cite{vHo-97-FDO}.
It applies when the structure of local solutions at one of the singular points satisfies certain conditions, and leads in particular to an efficient algorithm for finding first-order factors.
These two methods form the basis of the state-of-the-art implementation, due to van Hoeij~\cite{vHo-diffop} and available in Maple as \texttt{DEtools[DFactor]}.
Beyond the case of first-order factors, though, they are incomplete and
need to fall back on the exterior power method in ``hard'' cases (but
still benefit from van Hoeij's fast algorithm for first-order factors then).

The symbolic-numeric approach to factorization outlined above was suggested by van der Hoeven, who also gave fast algorithms for the high-precision computation of generators of the Galois group with rigorous error bounds, and a heuristic method for ``reconstructing'' the group~\cite{vdH-07-ANS,vdH-07-EAS}.
Related symbolic-numeric methods have been developed for the problems of finding all first-order right-hand factors~\cite{JKM-13-FHS},
and of computing Liouvillian solutions~\cite{LM-14-MNS}.
One of the present authors implemented van der Hoeven's approach and studied its
practical behavior~\cite{Goyer2021}.

Once numeric approximations of the generators are available,
the main task of the factorization algorithm is to find a non-trivial invariant subspace or prove that there is none.
Van der Hoeven presents an algorithm for it in~\cite{vdH-07-ANS}.
This task also appears as a basic problem in effective representation theory~\cite[\emph{e.g.,}][Chap.~1]{LuxPahlings2010}.
Most of the literature in this area deals either with computations over finite fields or with issues specific to exact computations in characteristic zero.
An exception is the early work of Gabriel~\cite{Gabriel1971}.
We note also that Eberly~\cite[p.~245]{Ebe-89-CAG} suggested combining
symbolic techniques with interval arithmetic for decomposing algebras
and representations over number fields;
however, no algorithm of this type appears to have been developed since
then.
Purely numerical methods for decomposing \emph{unitary}
representations~\cite[\emph{e.g.,}][]{Dixon1970} are a different subject
with is own developments but are of limited relevance to our problem.

\enlargethispage{\baselineskip}
Leaving aside the issue of representing complex numbers in an algebraic algorithm, though, the case of complex representation is the simpler one.
Speyer~\cite{Speyer2012} explains how to compute invariant subspaces based on classical methods for decomposing finite-dimensional algebras~\cite[compare, \emph{e.g.,}][]{Bremner2010}.
More generally, important ideas used in classical exact algorithms adapt to the rigorous numeric setting, including the Holt--Rees variant~\cite{HR-94-TMI} of Norton's irreducibility test~\cite{Parker1984},
and the use of splitting elements~\cite{Ebe-89-CAG,BabaiRonyai1990}.

\subsubsection*{Contribution}

We present a new symbolic-numeric algorithm for factoring ordinary
differential operators with rational function coefficients.
We make two simplifying assumptions.
Firstly, we restrict ourselves to \emph{Fuchsian} operators, that is, operators
with only regular singular points.
This restriction makes some details of the description technically simpler, but
we expect that a very similar approach works in general \cite[\emph{cf.}][]{vdH-07-ANS}.
Secondly, we assume that the operator to be factored only admits a finite number
of distinct factorizations.
We say more on this assumption and how it could be lifted in
Section~\ref{sec:annihilators} (see Footnote~\ref{ft:fix}).

Our algorithm can be viewed as a hybrid of van Hoeij's and van der Hoeven's
methods.
We point out that van Hoeij's method for exponential parts of multiplicity one
can be viewed as a special case of Norton's irreducibility test.
This reinterpretation shows how
it naturally applies to more instances in our symbolic-numeric setting.
In the remaining cases, we fall back on the relevant part of van der Hoeven's method.
As in van Hoeij's method, we make use of Hermite--Padé approximants
in the reconstruction phase.
We also propose several improvements that limit the need for very high numeric precision during reconstruction.
Compared to van Hoeij's algorithm, the benefit of our method is that we do not
resort to the exterior power method in any case.
Compared to van der Hoeven's, our algorithm aims to conclude as often as
possible without computing a complete set of generators of the group,
saving on the most expensive part in practice.

An implementation is in progress, and first positive results of this hybrid
algorithm are presented.

\subsubsection*{Outline}

We first recall some background on the analytic theory of differential equations
in Section~\ref{sec:monodromy}.
In Section~\ref{sec:optimic_ball_arithmetic}, we specify the model of interval
arithmetic used in our algorithms.
In Section~\ref{sec:annihilators}, we discuss the subproblem of
reconstructing a factor from numerical initial conditions presumed to lie in a
proper invariant subspace.
Then, in Section~\ref{sec:irreducibility_criteria}, we present several criteria
for finding such ``seed vectors'' or proving that no invariant subspace exists.
The main algorithm, combining the tools from the previous two sections, appears
in Section~\ref{sec:facto}.
Finally, in Section~\ref{sec:implem}, we report on experiments with an
implementation of the new algorithm.

\subsubsection*{Acknowledgements}

We thank Alin Bostan, Thomas Cluzeau, Joris van der Hoeven, and Anne Vaugon
for stimulating discussions,
and the reviewers for their constructive comments.

\section{Monodromy}
\label{sec:monodromy}

The main points of the analytic theory of linear differential equations with
rational coefficients that we will need are as follows.
We refer to~%
\cite{Hille1976,Ince-1926-ODE,MS-16-DSS,vdPS-03-GTL}
for more information.

\subsubsection*{Singular points}

Let
$L = \partial^r + a_{r - 1} \partial^{r - 1} + \cdots + a_0 \in \bK(x)\<\partial>$
be a differential operator.
Recall that the \emph{singular points} of~$L$ are the poles of
$a_0, \ldots, a_{r - 1}$ in $\bP^1 (\bC)$;
denote their set by~$\Sigma$.
Recall also that, on any simply connected domain
$U \subset \bC \backslash \Sigma$,
the space of analytic solutions of the equation $L (y) = 0$ has
dimension~$r$.
A~point
$x_0 \in \bP^1(\bC) \backslash \Sigma$
that is not a singular point is called \emph{ordinary}.

A point $\xi \in \Sigma$ is a \emph{regular singular point} if the
operator~$L_{\xi}$ obtained by making the change of variable
$x \leftarrow \xi + z$
(resp.~$x \leftarrow z^{- 1}$ if $\xi = \infty$)
in~$L$ has $r$~linearly independent solutions $y_1, \ldots, y_r$, of the form~%
\cite[Chap.~V]{Poole1936}
\begin{equation}%
  \label{eq:regsing}
  y_i (z) = z^{\alpha_i}
            (s_{i, d} (z) \log^d (z) + \cdots + s_{i, 0} (z))
\end{equation}
for some
$\alpha_i \in \QQbar$,
$d \in \bZ_{\geqslant 0}$,
and functions $s_{i, 0}, \ldots, s_{i, d}$ analytic on a disk
$| z | < \rho$.
Thus the~$y_i$ are analytic on the slit disk
$U = \{ z : | z | < \rho, \ z \notin \bR_{\leqslant 0} \}$.
The~$\alpha_i$ occurring in the basis~\eqref{eq:regsing} are called the
\emph{local exponents} at~$x = \xi$ and are the roots of the
\emph{indicial polynomial} of~$L$ at~$\xi$, a polynomial with coefficients
in~$\bK (\xi)$ that is easily computed from the operator.
(By \emph{Fuchs' criterion}, $\xi \in \Sigma$ is a regular singular point
if and only if, for $0 \leqslant k < r$, the valuation of~$a_k$ at~$\xi$
is at least~$k - r$.
Regularity can hence be checked syntactically.)

We assume from now on that all singular points of~$L$ are regular; an
operator with this property is also called \emph{Fuchsian}.
Note that any factor of a Fuchsian operator is Fuchsian as well.

\subsubsection*{Right-hand factors and monodromy}

Let $Y = (y_1, \ldots, y_r)$ be a basis of the solution space~$V$ of~$L$ on
some simply connected domain $U \subset \bC\backslash \Sigma$, and
consider the associated \emph{Picard--Vessiot extension},
that is, the differential field extension~$E$ of~$\bC (x)$ generated
by the~$y_i$.
The \emph{differential Galois group} of~$L$ can be defined as the group
$\cG = \aut_{\text{diff}}(E/\bC(x))$
of differential automorphisms of~$E$ whose restriction to~$\bC (x)$ is
the identity.
This is a linear algebraic
group \cite[Theorem~2.10]{MS-16-DSS}.
The map~$\psi_{Y}$ sending each element $\cG$ to the matrix in the
basis~$Y$ of its action on~$V$ is a faithful representation.
We denote its image by~$\Gal(L, Y)$.
For any ordinary point~$x_0$,
if $Y$ is the unique basis whose Wronskian matrix $\Wronskian(y_1, \dots, y_r)$
specializes to the identity matrix at~$x = x_0$,
then we also write $\Gal(L, x_0)$ in place of~$\Gal(L, Y)$.

Solutions of~$L$ defined on~$U$ can be analytically continued along any
path~$\gamma$ drawn in $\bC\backslash \Sigma$;
for fixed endpoints, the result depends only on the homotopy class
of~$\gamma$ in $\bC\backslash \Sigma$.
The action $M_\gamma$ of analytic continuation along a loop~$\gamma$
is an element of the differential Galois group.
A (local) \emph{monodromy matrix} of~$L$ around~$\xi$ in the
basis~$Y$ is a matrix of the form
$\psi_{Y}(M_\gamma)$
where $\gamma$~is a loop starting from~$U$ and going around~$\xi$ once,
in the positive direction, and enclosing no other singular point.
While there can be several homotopy classes with this property,
the \emph{monodromy group in the basis~$Y$}, that is, the matrix
group generated by local monodromy matrices in the basis~$Y$ around
each~$\xi \in \Sigma$, is defined without ambiguity.

\begin{proposition}\label{prop:galois}\cite[Corollary~2.35]{vdPS-03-GTL}
  A subspace $V_1 \subset V$ is the space of
  solutions of a right-hand factor of~$L$ if and only if it is invariant under
  the action of the differential Galois group.
\end{proposition}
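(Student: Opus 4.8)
The plan is to prove the two implications separately, using throughout the basic fact that each element $\sigma$ of the differential Galois group~$\cG$ commutes with~$\partial$ and fixes~$\bC(x)$ pointwise, so that $\sigma\bigl(P(y)\bigr) = P\bigl(\sigma(y)\bigr)$ for every operator $P \in \Cxd$ and every element~$y$ of the Picard--Vessiot extension~$E$.

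For the ``only if'' direction, I would start from a factorization $L = L_2 L_1$ and let $V_1 \subseteq V$ be the solution space of~$L_1$, of dimension $k = \ord L_1$. For $y \in V_1$ and $\sigma \in \cG$ the commutation relation gives $L_1\bigl(\sigma(y)\bigr) = \sigma\bigl(L_1(y)\bigr) = 0$, hence $\sigma(y) \in V_1$; so $V_1$ is $\cG$-invariant.

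For the converse, suppose $V_1 \subseteq V$ is a $\cG$-invariant subspace of dimension~$k$, fix a $\bC$-basis $y_1, \dots, y_k$ of it, and let $L_1 \in E\<\partial>$ be the unique monic operator of order~$k$ with kernel $\myspan(y_1, \dots, y_k)$, namely
\[
  L_1(y) = \frac{\Wronskian(y_1, \dots, y_k, y)}{\Wronskian(y_1, \dots, y_k)},
\]
which is well defined since the $y_i$ are linearly independent over constants. Each $\sigma \in \cG$ sends $(y_1, \dots, y_k)$ to another $\bC$-basis of~$V_1$, so applying $\sigma$ to the coefficients of~$L_1$ yields again a monic order-$k$ operator with kernel~$V_1$; by uniqueness this is~$L_1$ itself, and hence the coefficients of~$L_1$ lie in $E^{\cG}$. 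By the fundamental theorem of Picard--Vessiot theory, $E^{\cG} = \bC(x)$, whence $L_1 \in \Cxd$. Finally I would perform the right Euclidean division $L = QL_1 + R$ with $\ord R < k$ in $\Cxd$: every $y \in V_1$ then satisfies $R(y) = L(y) - Q\bigl(L_1(y)\bigr) = 0$, and since $R$ has order less than $\dim V_1$ this forces $R = 0$. Thus $L = QL_1$, and $V_1$ is precisely the solution space of the right-hand factor~$L_1$.

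The one genuinely substantive step, I expect, is the descent of the coefficients of~$L_1$ from~$E$ down to~$\bC(x)$: it rests on combining the uniqueness of the monic annihilator of a finite-dimensional solution space with the identity $E^{\cG} = \bC(x)$. The remaining ingredients are formal: the commutation rules for elements of~$\cG$, the nonvanishing of Wronskians of constant-linearly-independent functions, and a single division in the skew-polynomial ring.
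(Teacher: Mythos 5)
The paper does not actually prove this proposition: it is stated with a citation to van der Put and Singer (Corollary~2.35 of \cite{vdPS-03-GTL}) and used as a black box. Your proof is correct and is essentially the standard argument one finds in that reference and in other treatments of Picard--Vessiot theory: the ``only if'' direction uses that elements of~$\cG$ commute with operators in $\Cxd$; the ``if'' direction constructs the annihilator~$L_1$ via the Wronskian quotient, uses the uniqueness of the monic annihilator to show its coefficients are $\cG$-invariant, invokes $E^{\cG} = \bC(x)$ to descend them, and finishes with right Euclidean division plus the dimension count ($R$ of order $< k$ cannot kill $k$ constant-linearly-independent functions, since the constants of~$E$ are~$\bC$). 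The only small point worth making explicit is that the Wronskian $\Wronskian(y_1,\dots,y_k)$ is nonzero precisely because $y_1,\dots,y_k$ are linearly independent over the constants of~$E$, which is~$\bC$; this is the same fact you later use to conclude $R = 0$, and it is what makes the construction of~$L_1$ well defined. Everything checks out.
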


Thus, for any solution~$f$ of~$L$, the orbit $\bC[\cG] f$ is equal to the
solution space of the \emph{minimal annihilator} of~$f$, that is,
the monic operator~$R$ of least order such that $R(f) = 0$.
The operator $L$~is reducible if and only if~$V$, viewed as a
$\bC[\cG]$-module, admits a proper submodule.
An operator is \emph{decomposable} if it can be written as the least common
left multiple (lclm) of operators of lower order, that is, if $V$~is a
direct sum of proper submodules.

\begin{theorem}[Schlesinger]\label{thm:schlesinger} \cite[Theorem~2.28]{MS-16-DSS}
  The monodromy group of a Fuchsian operator is a Zariski-dense subset of the
  differential Galois group.
\end{theorem}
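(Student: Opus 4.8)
The plan is to run the classical argument resting on the differential Galois correspondence, with the Fuchsian hypothesis entering exactly once, to force the monodromy invariants to be rational. Fix an ordinary base point and a basis $Y$ of~$V$ there; realize the Picard--Vessiot extension~$E$ concretely as the differential subfield of germs of (multivalued) analytic functions on $\bC \setminus \Sigma$ generated by the entries of~$Y$ and their derivatives, so that $\cG = \aut_{\text{diff}}(E/\bC(x))$ and $\psi_Y$ identifies~$\cG$ with $\Gal(L,Y)$. Write $\mathcal{M} \subseteq \Gal(L,Y)$ for the monodromy group in the basis~$Y$ and $\overline{\mathcal{M}}$ for its Zariski closure, a closed subgroup of~$\cG$.

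First I would record the easy inclusion $\mathcal{M} \subseteq \cG$: analytic continuation along a loop in $\bC \setminus \Sigma$ is a $\bC$-linear automorphism of germs that commutes with~$\partial$ and fixes every element of $\bC(x)$ (rational functions being single-valued), hence each $M_\gamma$ lies in~$\cG$. This is precisely what the paragraph before the statement asserts, so I would take it for granted.

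The heart of the proof is to compute the field of monodromy invariants. For any $f \in E$ the set $\{\sigma \in \cG : \sigma f = f\}$ is Zariski-closed --- $f$ lies in some finite-dimensional $\cG$-stable $\bC$-subspace of~$E$ on which $\cG$ acts through a morphism of algebraic groups --- so this set contains $\mathcal{M}$ if and only if it contains $\overline{\mathcal{M}}$; hence $E^{\overline{\mathcal{M}}} = E^{\mathcal{M}}$. I then claim $E^{\mathcal{M}} = \bC(x)$. An element $f \in E^{\mathcal{M}}$ is invariant under every local monodromy, hence is a genuine single-valued analytic function on $\bC \setminus \Sigma$. Because $L$ is Fuchsian, near each $\xi \in \Sigma$ (and at~$\infty$) every solution in~$Y$ and each of its derivatives has the regular-singular shape of~\eqref{eq:regsing}, so is $O(|z|^{-N})$ for some~$N$ up to logarithmic factors; consequently the polynomial-in-$Y$ expression~$f$ has at worst moderate growth at each point of $\Sigma \cup \{\infty\}$. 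A single-valued analytic function of moderate growth at an isolated singularity extends meromorphically across it, so $f$ extends to a meromorphic function on $\bP^1(\bC)$ and is therefore rational; the reverse inclusion $\bC(x) \subseteq E^{\mathcal{M}}$ is clear.

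To finish, apply the fundamental theorem for Picard--Vessiot extensions: the maps $H \mapsto E^{H}$ and $F \mapsto \aut_{\text{diff}}(E/F)$ are mutually inverse, inclusion-reversing bijections between closed subgroups of~$\cG$ and intermediate differential fields, and since $\bC$ is algebraically closed one has $E^{\cG} = \bC(x)$. Combining with the previous step, $E^{\overline{\mathcal{M}}} = \bC(x) = E^{\cG}$, and injectivity of the correspondence yields $\overline{\mathcal{M}} = \cG$, which is the claimed Zariski density. The step I expect to require the most care is the moderate-growth argument: it is the only place the Fuchsian hypothesis is used, and it forces one to be explicit about the concrete realization of the abstract field~$E$ inside germs of multivalued functions and about the compatibility of that realization with both the monodromy action and the local expansions~\eqref{eq:regsing}.
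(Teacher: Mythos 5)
The paper does not prove this statement: it is cited (to Mitschi--Singer) and used as a black box, so there is no in-paper proof to compare against. Your sketch is a reasonable rendition of the classical argument from the cited literature, built on the Galois correspondence and the computation $E^{\mathcal M}=\bC(x)$.

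There is, however, a genuine gap in the moderate-growth step, and it is not quite the one you flag. The Picard--Vessiot field $E$ is the \emph{fraction field} of the ring $R=\bC(x)[y_i^{(j)},\Delta^{-1}]$ (with $\Delta$ the Wronskian), so a monodromy-invariant $f\in E$ is in general a \emph{ratio} of polynomial expressions in the $y_i^{(j)}$ over $\bC(x)$, not a polynomial expression as you write. Moderate growth of numerator and denominator does not by itself pass to the quotient: the denominator, though of moderate growth, may have zeros accumulating at $\xi$ on the universal cover. The standard repairs are either (a) to first observe, via Liouville's formula and the Fuchsian hypothesis, that $\Delta^{-1}$ is itself of regular-singular (Nilsson) type at each $\xi\in\Sigma$, so that all of $R$ consists of Nilsson-class germs, and then to prove and apply a lemma that a single-valued quotient of two Nilsson-class germs at $\xi$ is automatically meromorphic at $\xi$; or (b) to sidestep the growth estimate via $\cG$-simplicity of the Picard--Vessiot ring. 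The remaining steps of your sketch --- the inclusion of the monodromy group in $\cG$, the Zariski-closedness of stabilizers giving $E^{\overline{\mathcal M}}=E^{\mathcal M}$, and the finish from the Galois correspondence together with $E^{\cG}=\bC(x)$ --- are all sound.
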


Schlesinger's theorem reduces invariance under the differential Galois group
to invariance under a finite number of matrices.
A~similar result holds in the irregular case as a consequence of
Ramis' generalization of Schlesinger's theorem; see \cite[Theorem~3]{vdH-07-ANS}.

\begin{corollary} \label{cor:monodromy}
  A subspace $V_1 \subset V$ is the space of solutions of a right-hand factor
  $L_1 \in \bC (x) \<\partial>$
  of\/~$L$ if and only if it is left invariant
  by the monodromy matrices around all~$\xi \in \Sigma$,
  or equivalently by any choice of all but one of them.
\end{corollary}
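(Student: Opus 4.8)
The plan is to read the corollary off Proposition~\ref{prop:galois} together with Schlesinger's theorem (Theorem~\ref{thm:schlesinger}), by viewing monodromy matrices as particular elements of the image $\Gal(L,Y)$ of the differential Galois group. Fix a basis $Y$ of $V$ on $U$. Proposition~\ref{prop:galois} states that $V_1$ is the solution space of a right-hand factor $L_1 \in \Cxd$ of $L$ precisely when $V_1$ is invariant under $\Gal(L,Y)$ --- and the constant field is the right one, since the Picard--Vessiot extension in Section~\ref{sec:monodromy} is taken over $\bC(x)$. The set of $g \in \Gal(L,Y)$ with $g(V_1) = V_1$ is a Zariski-closed subgroup. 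Since every local monodromy matrix around a $\xi \in \Sigma$ belongs to $\Gal(L,Y)$, invariance under $\Gal(L,Y)$ immediately gives invariance under all of them. For the converse, I would note that invariance under the local monodromy matrices around every $\xi$ amounts to invariance under the whole monodromy group, as these matrices generate it; by Theorem~\ref{thm:schlesinger}, applicable because $L$ is Fuchsian, this group is Zariski-dense in $\Gal(L,Y)$, so invariance under it, combined with the stabilizer of $V_1$ being Zariski-closed, forces $\Gal(L,Y)$-invariance, and Proposition~\ref{prop:galois} concludes.

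For the ``all but one'' refinement I would invoke the topology of $\bP^1(\bC) \setminus \Sigma$: with base point $x_0 \in U$, one can choose a bouquet of simple positively-oriented loops $\gamma_\xi$, one around each $\xi \in \Sigma$ and enclosing no other singular point, whose ordered product is null-homotopic, so that the corresponding monodromy matrices satisfy $\prod_\xi \psi_Y(M_{\gamma_\xi}) = \id$. Hence any one of them is a word in the inverses of the others, and invariance of $V_1$ under all but one of the $\psi_Y(M_{\gamma_\xi})$ already forces invariance under the remaining one. Rearranging the bouquet shows that the omitted $\xi$ may be chosen arbitrarily; that the conclusion does not depend on the particular bouquet follows from the first equivalence, since invariance under the loops around the other punctures already yields $\Gal(L,Y)$-invariance, hence invariance under every loop.

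The argument is essentially a formality once Proposition~\ref{prop:galois} and Theorem~\ref{thm:schlesinger} are granted; the only points calling for care are bookkeeping: matching the constant field $\bC(x)$ of the corollary with the one implicit in Proposition~\ref{prop:galois}, and correctly handling the loop around $\infty$ --- together with the relation $\prod_\xi \gamma_\xi \simeq 1$ --- when $\infty \in \Sigma$. I do not expect a genuine obstacle beyond these verifications; the substantive step, turning Zariski density into a finite check, is entirely provided by Schlesinger's theorem.
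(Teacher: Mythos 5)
Your argument is correct and follows the paper's proof essentially step for step: the Zariski-density from Schlesinger's theorem together with the Zariski-closedness of the stabilizer of $V_1$ handles the first equivalence, and the relation $\prod_\xi \psi_Y(M_{\gamma_\xi}) = \id$ from a bouquet of loops gives the ``all but one'' refinement. You have merely spelled out more of the bookkeeping (constant field, choice of bouquet, treatment of $\infty$) than the paper does.
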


\begin{proof}
  Since $\mathcal G$ is an algebraic group, a subspace
  invariant under a Zariski-dense subset is invariant under it.
  The product of the local monodromy matrices is the identity,
  so $|\Sigma|-1$ of them generate the same group as all of them,
  namely the monodromy group.
\end{proof}

Monodromy matrices typically have transcendental entries.
Approximations with rigorous error bounds of the monodromy matrices can
be computed using known algorithms for the rigorous numerical
integrations of ODEs.
The \emph{formal monodromy matrix} at each $\xi \in \Sigma$, that is,
the local monodromy matrix around~$\xi$ expressed in a suitable
local basis of the type~\eqref{eq:regsing}, though, can be computed exactly.
(The computation essentially amounts to changing
$z^{\alpha}$ into $e^{2 \pi i \alpha} z^{\alpha}$ and
$\log (z)$ into $\log (z) + 2 \pi i$ in~\eqref{eq:regsing}.)
However, this is not enough to express the whole monodromy group in the
same basis, as one has to do to get an effective version of
Corollary~\ref{cor:monodromy}.

\subsubsection*{Adjoints}

Recall that the \emph{adjoint} of an operator~$L$ is the image~$L^*$ of~$L$ by
the anti-morphism of $\bK(x)\<\partial>$ to itself
mapping $\partial$ to~$-\partial$.

\begin{lemma} \label{thm:adjoint}
Let\/ $C$ denote the companion matrix of\/ $L$.
Define the matrices $B_0, \dots, B_{r-1}$ by $B_0 = I_r$ and
$B_{k+1} = B_k' - B_kC^T$.
Let $P$ be the matrix whose $(k+1)$th row
is the last row of $B_k$.
Then the map $\varphi \mapsto P(x_0)(\varphi^{-1})^TP(x_0)^{-1}$ is a group isomorphism from\/
$\Gal(L, x_0)$ to\/ $\Gal(L^*, x_0)$.
\end{lemma}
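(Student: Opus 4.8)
The plan is to turn the statement into a change-of-basis computation on fundamental matrices, transported by the Galois action.

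First I would fix the ordinary point~$x_0$ and let~$\Phi$ be the fundamental matrix of the companion system $Y' = CY$ normalized by $\Phi(x_0) = I_r$; thus $\Phi = \Wronskian(y_1, \dots, y_r)$ for the distinguished basis~$Y$ defining $\Gal(L, x_0)$. Differentiating $\Phi\Phi^{-1} = I_r$ shows that $(\Phi^{-1})^T$ is a fundamental matrix of the transpose system $Z' = -C^T Z$, again equal to the identity at~$x_0$. The core of the argument is that~$P$ gauges this transpose system to the companion system of~$L^*$: if $Z$ solves $Z' = -C^T Z$ and $w$ denotes its last component, then $PZ = (w, w', \dots, w^{(r-1)})^T$ and $w$ is a solution of~$L^*$. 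The first assertion is an immediate induction from $B_0 = I_r$ and $B_{k+1} = B_k' - B_k C^T$: writing $e_r^T = (0, \dots, 0, 1)$, one has $w = e_r^T B_0 Z$, and $w^{(k)} = e_r^T B_k Z$ forces $w^{(k+1)} = e_r^T B_k' Z - e_r^T B_k C^T Z = e_r^T B_{k+1} Z$. The second assertion---that the order-$r$ relation so obtained is exactly (the monic normalization of)~$L^*$, equivalently that the last row of $\sum_{k=0}^r b_k B_k$ vanishes when $\partial^r + b_{r-1}\partial^{r-1} + \cdots + b_0$ is the monic form of~$L^*$ (with $b_r = 1$)---is the one genuinely computational point, and the step where I expect to spend effort; it is the classical correspondence between $\sol(L^*)$ and the last components of solutions of the transpose companion system \cite[see, \emph{e.g.},][]{Ince-1926-ODE,vdPS-03-GTL}, and it can also be read off from Lagrange's identity and the bilinear concomitant pairing of $\sol(L)$ with~$\sol(L^*)$. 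Finally~$P$ is invertible---it is anti-triangular with $\pm1$ along the anti-diagonal, so $\det P = \pm1$---so the components~$w$ attached to a basis of solutions of the transpose system form a basis of~$\sol(L^*)$.

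Granting this, $\Psi := P(\Phi^{-1})^T P(x_0)^{-1}$ is the fundamental matrix of the companion system of~$L^*$ equal to~$I_r$ at~$x_0$, i.e.\ $\Psi = \Wronskian(w_1, \dots, w_r)$ for the basis $Y^* := (w_1, \dots, w_r)$ defining $\Gal(L^*, x_0)$. The entries of~$\Phi^{-1}$ are rational functions of those of~$\Phi$, hence lie in the Picard--Vessiot field~$E$ of~$L$, and those of~$P$ lie in $\bK(x) \subset E$; so $E$ is a Picard--Vessiot field for~$L^*$ as well and the two operators share the same differential Galois group~$\cG$. Any $g \in \cG$ fixes the entries of~$P$ and acts on~$\Phi$ by $g(\Phi) = \Phi\,\psi_Y(g)$, hence
\begin{align*}
  g(\Psi) &= P\,\bigl(g(\Phi)^{-1}\bigr)^T P(x_0)^{-1} \\
          &= \Psi\cdot\bigl(P(x_0)(\psi_Y(g)^{-1})^T P(x_0)^{-1}\bigr),
\end{align*}
i.e.\ $\psi_{Y^*}(g) = P(x_0)(\psi_Y(g)^{-1})^T P(x_0)^{-1}$. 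Since inverse-transpose and conjugation are each (anti)morphisms, $\varphi \mapsto P(x_0)(\varphi^{-1})^T P(x_0)^{-1}$ is a group homomorphism, and as $\psi_Y$ and~$\psi_{Y^*}$ are faithful with images $\Gal(L, x_0)$ and $\Gal(L^*, x_0)$ it restricts to a group isomorphism between them, as claimed. (Equivalently, one may run the same computation with analytic continuation along a loop in place of~$g$, so that this map sends the monodromy group of~$L$ in the basis at~$x_0$ onto that of~$L^*$; passing to Zariski closures and invoking Theorem~\ref{thm:schlesinger}---legitimate because $L^*$ is again Fuchsian---yields the same conclusion, and is the form one uses algorithmically.) The only substantive ingredient is thus the classical fact invoked in the second step; everything else is bookkeeping with changes of basis and with the Galois action.
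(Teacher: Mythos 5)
Your proof takes the same route as the paper: identify $(\Phi^{-1})^T$ (the paper's $U=(W^{-1})^T$) as a fundamental matrix of the adjoint companion system, invoke the classical fact (cited to \cite{vdPS-03-GTL}) that its last row solves $L^*$, verify $U^{(k)}=B_kU$ so that $P$ converts it into the Wronskian of the normalized $L^*$-basis, and then read off $\psi_{Y^*}(\sigma)=P(x_0)(\psi_Y(\sigma)^{-1})^TP(x_0)^{-1}$ from the Galois action on fundamental matrices. The only additions over the paper's proof are small explicit checks (invertibility of $P$ via its anti-triangular shape, that inverse-transpose followed by conjugation is a group homomorphism), which the paper leaves implicit; both are correct.
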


\begin{proof}
Let $W := \Wronskian(y_1, \dots, y_r)$ where the~$y_i$ are solutions of~$L$ such
that $W(x_0)=I_r$.
Note that $W' = CW$.
It can be proved \cite[Exercise 2.30]{vdPS-03-GTL} that the matrix
$U := (W^{-1})^T$ satisfies $U' = -C^T U$ and the last row $(v_1 \cdots v_r)$
of~$U$ is a basis of solutions of $L^*$.
The $B_k$ are defined so that $U^{(k)} = B_kU$.
Let $V = \Wronskian(v_1, \dots, v_r)$ and
$Z = \Wronskian(z_1, \dots, z_r)$ where the $z_i$ are solutions of $L^*$ such
that $Z(x_0)=I_r$.
Since $V = PU$ and $V = ZP(x_0)$, we have
$ \sigma(V)(x_0) = P(x_0) (\sigma(W)(x_0)^{-1})^T = \sigma(Z)(x_0) P(x_0)$
and therefore
$\psi_Z(\sigma) = P(x_0) (\psi_Y(\sigma)^{-1})^T P(x_0)^{-1}$
for any $\sigma \in \cG$.
\end{proof}

\section{Optimistic arithmetic} \label{sec:optimic_ball_arithmetic}

Our algorithms involve algebraic computations, including zero-tests, on
complex numbers that are known only approximately (but can be recomputed to
higher precision if necessary).

We formalize the way of performing these computations by the following variant of complex interval arithmetic.
Complex numbers are replaced by exactly
representable closed complex intervals, or \emph{balls}~\cite{vdH-10-BA}, containing them.
We denote by~$\balls$ the set of balls.
Given a ball $\bm z \in \balls$, we write $z \in \bm z$ to mean that $z$~is a complex number contained in~$\bm z$,
and $\rad(\bm z)$ to denote the radius of~$\bm z$.
We extend this notation to lists, vectors, matrices, and polynomials over~$\balls$.
A ball is \emph{exact} when its radius is zero.

As with usual interval arithmetic, versions operating on balls of basic
operations $\ast \in \{ +, -, \times, / \}$ are defined so that
$x \ast y \in \bm{x} \ast \bm{y}$
for all $x \in \bm{x}$, $y \in \bm{y}$,
and we assume that
$\rad (\bm{x} \ast \bm{y})$
tends to zero when $\bm{x}$ tends to a point~$x_0$ and $\bm{y}$ tends to a
point~$y_0$
(and both $(\bm{x}, \bm{y})$ and $(x_0, y_0)$ are contained in the domain of
continuity of~$\ast$, $i.e.$, no division by zero occurs).
However, the comparison $\bm{x}=\bm{y}$ returns ``true'' if and only if
$\bm{x}$~and~$\bm{y}$ intersect.

Thus, \emph{when the working precision is large enough}, all tests involved
in the execution of a particular algorithm on a given exact input yield the
same outcome as they would in infinite precision, and the output is a
rigorous enclosure of the exact result.
At a smaller working precision, equality tests may incorrectly return ``true'',
but we can still rigorously decide that two numbers are distinct
provided that the control flow of their computation was not affected by
previous incorrect tests.
We call this model {\emph{optimistic arithmetic}}.
It is close to the one based on computable complex numbers
used in~\cite{vdH-07-ANS}, but more explicit about
precision management.

\begin{convention}
We say that an algorithm  satisfies some property \emph{at high precision}
when the property holds given an accurate enough input.
More precisely, if~$\bm x$ is the input of the algorithm,
``at high precision, $P(x, \bm x)$'' means
$
  \forall x,
  \exists \varepsilon,
  \forall \bm x \ni x, \:
  (\rad(\bm x) < \varepsilon
  \implies
  P(x, \bm x)).
$
\end{convention}

Roughly speaking, using optimistic arithmetic is legitimate in our context
because
\begin{inparaenum}
  \item our irreducibility criteria are based on ``open'' conditions like
  checking that certain vectors span the whole ambient space, where the
  optimistic zero-test can do no worse than \emph{underestimate} the
  dimension;
  \item in the reducible case, candidate factors can be validated by an
  \emph{a posteriori} divisibility check carried out in exact arithmetic.
\end{inparaenum}

More precisely, inspecting the behavior of key algebraic algorithms shows that they satisfy the following properties.
The optimistic version can also fail when the algebraic
analogue would not, typically by trying to divide by an interval containing
zero.
This manifests by an error that can be caught by the caller.

\begin{lemma}(Row echelon form.) \label{lem:optimistic-echelon}
  Given $\bm M \in \balls^{m \times n}$, one can compute
  $\bm R \in \balls^{m \times n}$,
  $\bm T \in \balls^{n \times n}$
  such that

  \begin{enumerate}
  \item $\bm{R}$ is row-reduced, in the sense that there is
  $0 \leqslant r \leqslant \min(m, n)$ and a list $j_0 < j_1 < \dots < j_{r+1}$ where
  $j_0 = 0$ and $j_{r+1} = n+1$, such that
  \begin{itemize}
  \item for all\/ $1 \leqslant i \leqslant r$, the $j_i$th column of~$\bm R$ is exact, with
  the $i$th entry equal to one and all other entries equal to zero,

  \item for all\/ $0 \leqslant i \leqslant r$ and $j_i < j < j_{i+1}$, each of the $m-i$ last entries of the $j$th column
  of~$\bm R$ is a ball that contains zero,
  \end{itemize}

  \item $r$ cannot exceed the rank of any $M\in \bm M$,

  \item for all $M\in \bm{M}$, there exist $R\in\bm{R}$ and an invertible
  $T\in\bm{T}$ such that $R = T M$,

  \item at high precision, $r$ is equal to the rank of $M$ and the reduced row
  echelon form of~$M$ belongs to~$\bm R$.
  \end{enumerate}
  In particular, at high precision, we can verify that an
  $M \in \bC^{m \times n}$ has full rank.
\end{lemma}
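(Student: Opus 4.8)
The plan is to run Gaussian elimination in ball arithmetic while accumulating the row operations into $\bm T$. Initialise $\bm R \leftarrow \bm M$ and $\bm T$ to an identity matrix, and maintain a current pivot row index~$t$ and column index~$j$, both starting at~$1$. At each step, scan $\bm R_{t,j}, \dots, \bm R_{m,j}$ for an entry with $0 \notin \bm R_{k,j}$, as detected by the optimistic test (any deterministic choice among the valid candidates works, e.g.\ the topmost one or the one of largest midpoint modulus). If there is none, declare column~$j$ non-pivot and increment~$j$. Otherwise, swap row~$k$ with row~$t$, divide row~$t$ by the pivot $\bm R_{t,j}$, and subtract appropriate multiples of row~$t$ from the other rows so as to clear column~$j$; record each such operation as a left multiplication and update $\bm T$ accordingly. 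After the elimination the ball in position~$(t,j)$ contains~$1$ and those in positions~$(k,j)$ for $k \neq t$ contain~$0$; \emph{overwrite} them by the exact balls $\{1\}$ and~$\{0\}$. Set $j_t \leftarrow j$, increment both~$t$ and~$j$, and iterate until $t > m$ or $j > n$; return $\bm R$, $\bm T$, and $r := t-1$.

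The crux is that the recorded sequence of pivot choices is simultaneously valid for every $M \in \bm M$. By inclusion-preservation of the ball operations, a routine induction shows that after $i$~pivots the current $\bm R$ encloses the matrix $M^{(i)}$ obtained by applying the recorded operations to~$M$, and that $\bm T$ encloses the corresponding product of elementary matrices. Hence, whenever the optimistic test certifies $0 \notin \bm R_{k,j}$, the entry $M^{(i)}_{k,j}$ is nonzero and is a genuine pivot for~$M$, while whenever the scan fails, all of $M^{(i)}_{t,j}, \dots, M^{(i)}_{m,j}$ vanish. Thus for each $M \in \bm M$ there is an invertible $T \in \bm T$ — a product of row swaps, nonzero scalings and transvections — such that $TM$ is exactly the reduced matrix of~$M$ computed with this pivot sequence. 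That matrix carries exact $0$'s and $1$'s in its pivot columns, which lie in the overwritten balls, and its remaining entries lie in the (unmodified) balls of~$\bm R$, so $TM \in \bm R$: this is item~(3). Item~(2) follows since $\operatorname{rank} M = \operatorname{rank}(TM) \geqslant r$, the first~$r$ rows of~$TM$ being independent (each carrying a~$1$ in a distinct column $j_1 < \dots < j_r$). Item~(1) is guaranteed by the bookkeeping: the pivot columns are exact with the prescribed unit pattern, and the $m-i$ lowest entries of a non-pivot column $j_i < j < j_{i+1}$ all contain zero when column~$j$ is skipped and keep this property afterwards, since every later operation touching them either permutes two of them or adds a multiple of another entry provably containing zero.

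For item~(4), note first that $\bm R_{k,j}$ always contains the true value $M^{(i)}_{k,j}$, so a vanishing entry is never certified nonzero. Conversely, by the hypothesis that the radii of the ball operations tend to~$0$ as their inputs tend to points, combined with the fact that the algorithm performs a number of operations bounded in terms of~$m$ and~$n$ alone, a nonzero entry of~$M^{(i)}$ has a ball excluding zero once $\rad(\bm M)$ is small enough. At such precision the control flow of the ball computation coincides with that of exact Gaussian elimination on~$M$ under the same selection rule; hence $r = \operatorname{rank} M$, and the reduced row echelon form of~$M$, whose pivot columns are exact and whose other entries are enclosed by~$\bm R$, belongs to~$\bm R$. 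The concluding remark is then immediate: by item~(2), obtaining $r = \min(m,n)$ always certifies that $M$ has full rank, and by item~(4) this certificate is indeed reached at sufficiently high precision whenever $M$ does.

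I expect the main obstacle to be not any single calculation but the careful bookkeeping justifying the overwriting step — arguing that, after the exact substitutions, $\bm R$ and~$\bm T$ still jointly enclose a genuine factorisation $R = TM$ for \emph{every} $M \in \bm M$ — and, relatedly, the induction establishing that one pivot sequence serves all $M \in \bm M$ at once. The limiting argument for item~(4) is routine given the stated continuity of ball arithmetic, but it does rely on the operation count being independent of the precision, a point worth stating explicitly.
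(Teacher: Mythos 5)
The paper does not actually prove this lemma: Lemmas~\ref{lem:optimistic-echelon}--\ref{lem:optimistic-roots} are stated without proof, prefaced only by the remark that ``inspecting the behavior of key algebraic algorithms shows that they satisfy the following properties.'' So there is no in-paper argument to compare against. Your proof supplies the missing verification by the natural route — Gaussian elimination in ball arithmetic with pivots certified by balls excluding zero, accumulation of the row operations into~$\bm T$, and overwriting of pivot columns by the exact balls $\{0\}$ and $\{1\}$ — and the overall structure is sound. The crucial observations are all there: that one pivot sequence serves all $M\in\bm M$ simultaneously by inclusion-preservation; that the overwriting is legitimate because, for the specific elementary matrices belonging to a given~$M$, the product $TM$ genuinely has the unit pattern in its pivot columns; and that the bounded operation count plus the radius-shrinking hypothesis forces the control flow to stabilize at high precision.

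One statement is an overclaim and should be deleted or weakened: ``whenever the scan fails, all of $M^{(i)}_{t,j},\dots,M^{(i)}_{m,j}$ vanish.'' A failed scan only tells you those balls \emph{contain} zero; some of the enclosed values may well be nonzero at low precision. Fortunately nothing in your argument depends on this: item~(1) only requires the balls to contain zero, item~(3) only requires the (possibly nonzero) entries of $TM$ to be enclosed, and item~(2) is a consequence of the certified pivots alone. It would also be worth noting explicitly that $\bm T$ should live in $\balls^{m\times m}$ rather than $\balls^{n\times n}$ as printed in the statement (a typo you silently and correctly work around), and that pivot-selection heuristics based on midpoints should be phrased so that at high precision they reduce to a precision-independent rule (``topmost certified-nonzero entry'' suffices) so that the control-flow-stabilization argument in item~(4) goes through cleanly.
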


\begin{lemma} (Kernel.) \label{lem:optimistic-kernel}
  Given $\bm M \in \balls^{m \times n}$, one can compute
  $\bm V = (\bm v_1, \dots, \bm v_{\ell}) \in \left(\balls^n\right)^{\ell}$
  such that

  \begin{enumerate}
  \item any $v_1, \dots, v_\ell$ with $v_i \in \bm v_i$ are linearly independent,

  \item for all $M\in\bm{M}$, there exists $V\in\bm{V}$,
  that is, $V = (v_1, \dots, v_\ell)$ and $v_i \in \bm v_i$ for all $i$,
  such that\/ $\ker(M) \subset \myspan(V)$,

  \item at high precision, the last inclusion is an equality.
  \end{enumerate}
  In particular, at high precision, we can verify the nullity of a kernel.
\end{lemma}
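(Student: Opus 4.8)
The plan is to reduce the kernel computation to the row-echelon computation of Lemma~\ref{lem:optimistic-echelon} applied to~$\bm M$, which yields $\bm R$, $\bm T$, an integer~$r$, and pivot indices $j_1 < \dots < j_r$, and then to read off a kernel basis from the pivot structure in the textbook way, taking care about what remains valid when the working precision is not yet large. Concretely, for each non-pivot column index~$j$ I would set $\bm v^{(j)} \in \balls^n$ to be the vector whose entry in position~$j$ is the exact ball~$1$, whose entries in the other non-pivot positions are the exact ball~$0$, and whose entry in pivot position~$j_i$ is $-\bm R_{i,j}$ for $i = 1, \dots, r$; the output $\bm V$ is the tuple of these $\ell := n-r$ vectors. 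Item~(1) is then immediate: for any representatives $v^{(j)} \in \bm v^{(j)}$, the $\ell \times \ell$ submatrix of $[\,v^{(1)} \mid \cdots \mid v^{(\ell)}\,]$ formed by the non-pivot rows is exactly the identity, so the $v^{(j)}$ are linearly independent.

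For item~(2), fix $M \in \bm M$. By Lemma~\ref{lem:optimistic-echelon}(3) there are $R \in \bm R$ and an invertible $T \in \bm T$ with $R = TM$, hence $\ker M = \ker R$. Since the pivot columns of~$\bm R$ are exactly the standard basis vectors $e_1, \dots, e_r$, expanding $Rx = 0$ column by column shows that every $x \in \ker R$ satisfies the $r$ relations $x_{j_i} = -\sum_{j\ \text{non-pivot}} R_{i,j}\,x_j$, together with further relations contributed by the rows of~$R$ of index $> r$. Taking, for this~$M$, the representative $v^{(j)}$ whose pivot entries are exactly $-R_{i,j}$ (admissible because $R \in \bm R$), the span of the resulting $v^{(j)}$ is exactly the subspace defined by those $r$ relations, with the non-pivot coordinates free; it therefore contains $\ker R = \ker M$, which is the claimed inclusion. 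For item~(3), Lemma~\ref{lem:optimistic-echelon}(4) gives at high precision $r = \operatorname{rank}(M)$ and the true reduced row echelon form of~$M$ inside~$\bm R$; choosing $R$ to be that form, its rows of index $> r$ vanish, the extra relations disappear, and the inclusion becomes an equality. In that regime $\ell = n - \operatorname{rank}(M) = \dim\ker M$, so when $\ker M = \{0\}$ the algorithm returns the empty tuple, which certifies nullity.

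The step needing the most care is item~(2) at arbitrary precision. At low precision the matrix~$R$ returned by Lemma~\ref{lem:optimistic-echelon} is only ``row-reduced'' in the weak sense of that lemma and need not be a genuine reduced row echelon form, so the $v^{(j)}$ need not annihilate~$R$ exactly and $\myspan(\bm V)$ may properly contain $\ker M$. What keeps the inclusion alive is that the $v^{(j)}$ are built using only the relations attached to the \emph{exact} pivot columns $e_1, \dots, e_r$: discarding the relations coming from the lower rows of~$R$ can only enlarge the candidate space. One must just be disciplined about choosing the representatives of the $\bm v^{(j)}$ consistently with the matrix $R$ produced by $R = TM$, and note (via Lemma~\ref{lem:optimistic-echelon}(2)) that $r$ never exceeds $\operatorname{rank}(M)$, so $\ell \geq \dim\ker M$ holds throughout.
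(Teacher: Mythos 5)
The paper does not actually provide a proof of this lemma: the four ``optimistic arithmetic'' lemmas (echelon form, kernel, spin-up, root isolation) are stated with only the remark that they follow from ``inspecting the behavior of key algebraic algorithms.'' Your proof fills in this gap, and it is correct. Reducing to Lemma~\ref{lem:optimistic-echelon} and reading off a kernel basis from the pivot structure in the textbook way is clearly the intended construction. The three points are all handled cleanly: item~(1) via the identity submatrix at the non-pivot rows (which uses that the $1$'s and $0$'s you place are exact balls, so independence is unconditional); item~(2) by the observation that your candidate space is cut out only by the $r$ relations coming from the exact pivot rows, while $\ker R$ is cut out by all $m$ rows, so the inclusion is automatic and survives even when $R$ is only ``row-reduced'' in the weak sense of Lemma~\ref{lem:optimistic-echelon}(1); and item~(3) by invoking Lemma~\ref{lem:optimistic-echelon}(4) to put the true reduced row echelon form inside $\bm R$ at high precision. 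One could also finish item~(3) by a dimension count ($\ell = n - r = n - \operatorname{rank}(M) = \dim\ker M$, plus item~(1) and the inclusion of item~(2)), which avoids appealing to the RREF of $M$ itself lying in $\bm R$; but your more explicit argument is equally valid given the statement of Lemma~\ref{lem:optimistic-echelon}(4).
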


\begin{lemma} (Spin-up.) \label{lem:optimistic-spin-up}
  Given a list
  $\bm{A} \in (\balls^{n \times n})^k$
  of matrices and a vector
  $\bm{v} \in \balls^n$,
  one can compute $\bm U = (\bm u_1, \dots, \bm u_\ell) \in \left(\balls^n\right)^{\ell}$
  such that
  \begin{enumerate}
  \item any $u_1, \dots, u_\ell$ with $u_i \in \bm u_i$ are linearly independent,

  \item for all $M\in\bm{A}$ and $v\in\bm{v}$, there exists $U\in\bm{U}$ such that
  $\bC[A] v \supset \myspan(U)$,

  \item at high precision, the last inclusion is an equality.
  \end{enumerate}
  In particular, at high precision, we can verify that $\bC[A] v = \bC^n$ when this is the case.
\end{lemma}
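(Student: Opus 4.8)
The plan is to adapt the classical ``spin-up'' (orbit-closure) algorithm to the ball model and to verify the four numbered properties in turn. Concretely, I would maintain a list of basis balls $\bm U = (\bm u_1, \dots, \bm u_\ell)$, initialized with $\bm u_1 = \bm v$ (discarding it if $\bm v$ contains zero and the optimistic echelon reduction of the singleton matrix returns rank~$0$). I then repeatedly apply each matrix $\bm A_i$ to each current basis ball $\bm u_j$, stack the results together with the current basis as rows of a matrix over $\balls$, and call the row-echelon routine of Lemma~\ref{lem:optimistic-echelon}. The new basis is read off from the pivot rows (or rather: I keep an explicit set of linearly independent vectors certified by the exact pivot columns guaranteed by Lemma~\ref{lem:optimistic-echelon}(1)). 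I iterate until no new pivots appear, i.e.\ until the computed rank stabilizes; since the rank is bounded by $n$ and can only increase, this terminates after at most $n$ rounds.

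Property~(1), linear independence of any point selection $u_i \in \bm u_i$, is inherited directly from Lemma~\ref{lem:optimistic-echelon}(1): the pivot columns of the final echelon form are exact with a single one and zeros elsewhere, which forces the corresponding rows to be linearly independent for every choice of representatives. For property~(2), fix $A \in \bm A$ and $v \in \bm v$ and run the \emph{exact} algorithm in parallel: by the inclusion property Lemma~\ref{lem:optimistic-echelon}(3) (applied at each round) and the propagation of set membership through the ball arithmetic for $+, -, \times$, every exact vector produced by the genuine spin-up lies in the span of the current basis balls; inducting over the rounds shows $\bC[A]v \subset \myspan(U)$ for a suitable $U \in \bm U$. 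Note the inclusion goes the ``safe'' way: the optimistic echelon form can only \emph{underestimate} the rank, so we may stop early with too small an orbit, never too large—this is exactly what property~(2) asserts. Property~(3) then follows from Lemma~\ref{lem:optimistic-echelon}(4): at high precision every intermediate rank computation is exact and every intermediate reduced echelon form is correctly enclosed, so the ball algorithm traces the exact algorithm step by step, the stopping criterion fires at the same moment, and the computed span equals $\bC[A]v$. The final ``in particular'' is the special case where $\bC[A]v = \bC^n$: at high precision $\ell = n$, and then the $n \times n$ matrix formed by the $\bm u_i$ has full rank by the last sentence of Lemma~\ref{lem:optimistic-echelon}, which we can verify.

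I expect the main subtlety to be bookkeeping rather than mathematics: one must be careful that the ``basis'' carried between rounds consists of vectors whose independence is \emph{certified} (via exact pivot columns), not merely vectors that happen to be independent for the true values, since otherwise the inductive argument for property~(2) breaks when a near-dependency is optimistically collapsed. The clean way is to never store the span as abstract vectors but to re-expose, at each round, the stacked matrix ``old basis on top, new images below'' and let Lemma~\ref{lem:optimistic-echelon} choose the pivots; the transformation matrix $\bm T$ it returns lets one express the discarded rows in terms of the kept ones with rigorous enclosures, which is what feeds the inclusion in property~(2). Everything else—termination, the high-precision exactness—is a routine consequence of the corresponding clauses of Lemma~\ref{lem:optimistic-echelon} together with the continuity assumption on ball arithmetic stated in Section~\ref{sec:optimic_ball_arithmetic}.
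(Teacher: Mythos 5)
The paper states this lemma without a detailed proof; the only justification offered for Lemmas~\ref{lem:optimistic-echelon}--\ref{lem:optimistic-roots} is the general remark that they follow from ``inspecting the behavior of key algebraic algorithms.'' There is therefore no paper proof to compare yours against, but the algorithmic setup you describe---repeatedly apply the generators, stack with the current basis, echelon-reduce, stop when the rank stabilizes---is indeed the intended one, and your treatment of properties~(1) and~(3) is fine.

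Your justification of property~(2), however, inverts the inclusion. You assert that ``every exact vector produced by the genuine spin-up lies in the span of the current basis balls'' and conclude $\bC[A]v \subset \myspan(U)$; but the lemma asks for $\myspan(U)\subset\bC[A]v$, and your own next sentence (``we may stop early with too small an orbit, never too large'') describes exactly that opposite, correct inclusion, contradicting the formal claim. Worse, the inductive claim as you state it is simply false: if the optimistic rank test underestimates the rank at some round, the loop fires early and the genuine spin-up would go on to produce vectors outside the computed span. What does work is the induction in the other direction. Start from $v\in\bm v$ with $v\in\bC[A]v$; at each round, the images of the current basis under matrices in $\bm A$ stay in $\bC[A]v$ for a choice of representatives consistent with the given $A$ and~$v$, by inclusion-preservation of ball arithmetic; and Lemma~\ref{lem:optimistic-echelon}(3) supplies, for that exact stacked matrix $M$, some $R\in\bm R$ and an \emph{invertible} $T\in\bm T$ with $R=TM$, so the pivot rows of~$R$ are again linear combinations of elements of $\bC[A]v$. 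This produces the required $U\in\bm U$ with $\myspan(U)\subset\bC[A]v$, and the ``In particular'' follows because $\myspan(U)=\bC^n$ then forces $\bC[A]v=\bC^n$.
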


\begin{lemma} (Root isolation.) \label{lem:optimistic-roots}
  Given a monic polynomial $\bm{P}$, one can compute pairs
  $(\bm \lambda_1, m_1), \dots, (\bm \lambda_\ell, m_\ell)$
  such that
  \begin{enumerate}
  \item the $\bm \lambda_i \in \balls$ are pairwise disjoint and the $m_i$ are
  positive,

  \item for all $P\in\bm{P}$, each $\bm \lambda_i$ contains exactly $m_i$ roots (counted with multiplicities) of $P$,
  and all roots of~$P$ are contained in $\bigcup_i \bm \lambda_i$,

  \item at high precision, no two distinct
  roots of~$P$ are contained in the same $\lambda_i$.
  \end{enumerate}
  In particular, at high precision, we can verify that a root is simple.
\end{lemma}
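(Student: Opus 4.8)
The plan is to combine a numerical polynomial root solver with an \emph{a posteriori} certification of root counts based on the argument principle, plus a trivial fallback that makes the soundness items hold for every input (this is essentially what validated polynomial root-isolation algorithms do).

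The first observation is that, since $\bm{P}$ is monic, its leading coefficient is exactly $1$, so from the coefficient balls one computes in ball arithmetic a representable real $B$ that is an upper bound for the classical Cauchy root bound of every $P \in \bm{P}$; then all $n := \deg\bm{P}$ roots (counted with multiplicity) of every such $P$ lie in $|z| < B$. This already furnishes an admissible, though useless, output, namely the single pair $(\bm\lambda_1, m_1)$ with $\bm\lambda_1$ the ball of center $0$ and radius $B$ and $m_1 = n$: it satisfies items~(1)--(2) because, by the fundamental theorem of algebra, each $P \in \bm{P}$ has exactly $n$ roots with multiplicity, all in $\bm\lambda_1$. I would have the algorithm return this fallback whenever the refinement described next does not visibly succeed; this is what guarantees items~(1)--(2) unconditionally.

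For the refinement I would first run a numerical root finder on the midpoint of $\bm{P}$, obtaining approximations $\tilde z_1, \dots, \tilde z_n$ of its roots listed with multiplicity, then cluster them into groups whose pairwise gaps are much larger than their diameters, and enclose each group in a small ball $\bm\lambda_i$ so chosen that the $\bm\lambda_i$ are pairwise disjoint and each bounding circle $\partial\bm\lambda_i$ sits inside one of the gaps. For each $i$ I would cover $\partial\bm\lambda_i$ by finitely many balls and check in ball arithmetic that $\bm{P}$ contains no zero on any of them; this certifies that no $P \in \bm{P}$ vanishes on $\partial\bm\lambda_i$. A validated quadrature would then enclose the integer $w_i = \frac{1}{2\pi\mathrm{i}}\int_{\partial\bm\lambda_i}\frac{\bm{P}'(z)}{\bm{P}(z)}\,\mathrm{d}z$; since the true value is an integer, a radius below $\frac12$ pins it down. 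Because a ball is convex, hence connected, and the integrand varies continuously with the coefficients of $P$ as long as $P$ has no zero on the contour, this integer is the same for every $P \in \bm{P}$, and by the argument principle it is the number of roots of each such $P$ inside $\bm\lambda_i$, with multiplicity. If every $w_i$ is thus determined, is $\ge 1$, and the $w_i$ sum to $n$, I would output the pairs $(\bm\lambda_i, w_i)$; items~(1)--(2) then hold, the ``all roots'' part of~(2) following from $\sum_i w_i = n$ together with $\bigcup_i \bm\lambda_i \subseteq \{|z| < B\}$, which leaves no room for a root outside the union. Otherwise I would fall back to $(\bm\lambda_1, n)$.

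For item~(3), fix an exact $P$ and let $\rad(\bm{P}) \to 0$ as in the Convention: the midpoint converges to $P$ coefficientwise, the numerical roots converge to the roots of $P$, the clusters converge to the fibres of the root multiset over the distinct roots of $P$, and the inter-cluster gaps stay bounded below by half the minimal separation of the distinct roots of $P$. Hence, once $\rad(\bm{P})$ is small enough, every contour integral is certified, each $\bm\lambda_i$ has shrunk around a single distinct root of $P$, no $\bm\lambda_i$ contains two distinct roots, and the $w_i$ sum to $n$; a simple root is then exhibited by the corresponding $w_i$ being~$1$. The hardest point will be the handling of multiple or nearly multiple roots: roots that coincide, or that lie within $\rad(\bm{P})$ of each other, cannot be separated, so the algorithm must enclose whole clusters and can only promise the separation of \emph{distinct} roots ``at high precision''. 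A second, milder point is the uniformity over $P \in \bm{P}$ demanded in~(2), which is precisely why the non-vanishing test ranges over the whole contour and why the connectedness argument is needed.
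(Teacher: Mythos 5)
The paper gives no explicit proof of this lemma; Lemmas~\ref{lem:optimistic-echelon}--\ref{lem:optimistic-roots} are introduced only by the remark that ``inspecting the behavior of key algebraic algorithms shows that they satisfy the following properties,'' leaving the construction implicit. Your proof fills this gap with a standard validated-numerics scheme (approximate factorization of the midpoint polynomial, clustering, certified non-vanishing on contours, winding numbers via rigorous quadrature, plus a trivial one-ball fallback to make items~(1)--(2) unconditional). The argument is sound: since $\bm P$ is a product of balls it is connected, so the integer-valued winding number is constant over $P\in\bm P$, which is exactly what makes item~(2) hold uniformly; the fallback correctly covers the non-refinable case; and the convergence argument for item~(3) is the right one, noting as you do that the clustering threshold must shrink with $\rad(\bm P)$ so that any two \emph{distinct} roots of the fixed $P$ are eventually separated (coincident roots being permanently trapped in one $\bm\lambda_i$, which the lemma allows). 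Two small remarks: the fallback breaks down when $\deg\bm P=0$ (no roots, so the output must be the empty list), and in the spirit of the paper's optimistic framework one could equally well signal a catchable failure instead of returning the coarse enclosure, but your choice is consistent with the lemma as stated. An alternative to contour integration that some implementations prefer is a Gershgorin- or Smale-type a posteriori bound certifying the cluster multiplicities directly from the approximate roots, but this changes nothing in substance.
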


\section{Minimal annihilators}
\label{sec:annihilators}

Like both van Hoeij's and van der Hoeven's,
our factoring algorithm works by searching for a solution that belongs to a
proper invariant subspace,
and reconstructing an annihilator of that solution.
In this section, we discuss the problem of reconstructing an invariant subspace,
and a corresponding right-hand factor, from a seed vector.

We fix
a monic differential operator $L \in \bK(x)\langle\partial\rangle$
of order~$r$,
and an ordinary point $x_0 \in \bQ$ of~$L$.
We denote $G = \Gal(L, x_0)$.
A solution~$f$ of~$L$ is represented by the vector
$v = (f(x_0), \dots, f^{(r-1)}(x_0))^T$
(so that the action of~$\mathcal G$ on~$f$ corresponds to a left action of~$G$
on~$v$),
and we sometimes abusively identify $f$~with~$v$.

We use Algorithm~\ref{algo:annihilator} to compute a
right-hand factor of~$L$ from an approximate seed vector~$\bm v$.

\begin{proposition} \label{prop:Annihilator}
Fix $\varphi_0, \dots, \varphi_k \in \bC[G]$.
Let $\bm \varphi_0, \dots, \bm \varphi_k \in\balls^{r\times r}$
be such that $\varphi_i \in \bm \varphi_i$ for all~$i$.
\begin{enumerate}
\item \label{item:factor}
{\normalfont Annihilator($L$, $\bm v$, $(\bm \varphi_0, \dots, \bm \varphi_k)$, $t$)}
returns either the special value\/ \emph{Inconclusive} or a right-hand factor
$R\in \QQbar(x)\<\partial>$ of\/~$L$.
\item \label{item:min-is-L}
If the output is $L$, no $v\in \bm v$ admits an annihilator of order~$<r$.
\end{enumerate}
Assume further that exact initial conditions $v \in \bC ^r$ are fixed
and $\bm v$ is chosen such that $\bm v \ni v$.
Let~$M$ be the minimal annihilator of~$v$.
\begin{enumerate}
\setcounter{enumi}{2}
\item \label{item:Annihilator:large-t}
If $M \in \QQbar(x)\<\partial>$, then,
at high precision and for large enough~$t$, the output is~$M$.
\item \label{item:Annihilator:whole-group}
If $M = L$ and $\varphi_0, \dots, \varphi_k$ generate $\bC[G]$,
the output at high precision is~$L$ with no assumption on~$t$.
\end{enumerate}
\end{proposition}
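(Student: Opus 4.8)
I would treat the four assertions in order, the first two by inspecting Algorithm~\ref{algo:annihilator} together with the dictionary between annihilators and $\bC[G]$-submodules, and the last two by analysing the spin-up and reconstruction steps at high precision.

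For~\eqref{item:factor}, the point is that the algorithm never returns a non-trivial operator~$R$ without first certifying it in exact arithmetic: its ball coefficients are recognized as elements of~$\QQbar$, and $R$~is checked to right-divide~$L$ by exact Euclidean division in $\QQbar(x)\<\partial>$; so a returned $R \neq L$ is genuinely a right-hand factor, while the case of output~$L$ is immediate since $L \in \bK(x)\<\partial> \subseteq \QQbar(x)\<\partial>$ right-divides itself. For~\eqref{item:min-is-L}, I would track the branch that outputs~$L$. It is entered when the spin-up step, run on $\bm v$ and $(\bm\varphi_0,\dots,\bm\varphi_k)$, returns a free family of~$r$ vectors; by parts~(1)--(2) of Lemma~\ref{lem:optimistic-spin-up}, which hold at \emph{any} precision, this forces $\bC[\varphi_0,\dots,\varphi_k]\,v = \bC^r$ for \emph{every} $v \in \bm v$ (using the fixed representatives $\varphi_i \in \bm\varphi_i$), hence $\bC[G]\,v = \bC^r$ since each $\varphi_i \in \bC[G]$, so the $\bC[\cG]$-orbit of the corresponding solution is $r$-dimensional and its minimal annihilator has order~$r$. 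Should the algorithm also be able to output~$L$ after the reconstruction loop exhausts all target orders below~$r$, I would add that then no right-hand factor of~$L$ of order $<r$ annihilates any $v \in \bm v$ — by the over-approximation guarantees of Lemmas~\ref{lem:optimistic-spin-up} and~\ref{lem:optimistic-kernel} — and note that an annihilator of such a~$v$ of order $<r$ would have solution space $\bC[G]\,v \subseteq \sol(L)$ and thus right-divide~$L$, a contradiction.

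For~\eqref{item:Annihilator:large-t}, set $W := \bC[\varphi_0,\dots,\varphi_k]\,v \subseteq \bC[G]\,v = \sol(M)$. At high precision, Lemma~\ref{lem:optimistic-spin-up}(3) yields exact data for a basis of~$W$. If $\dim W = r$ then $W = \bC^r$, so $M = L$ and we land in the branch of~\eqref{item:min-is-L}; otherwise the reconstruction phase runs, and the fact I would isolate is that $M$ is the monic operator of \emph{least order} annihilating~$W$ — any operator killing $v \in W$ has order $\geq \ord M$, while $M$ kills all of $W \subseteq \sol(M)$, and uniqueness at that order holds because $M$ right-divides, hence equals, any monic operator of order~$\ord M$ that kills~$v$. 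I would then argue that, for $t$ large enough (enough power-series terms, and a coefficient-degree bound exceeding that of~$M$) and at high precision, the Hermite--Padé/kernel step (Lemma~\ref{lem:optimistic-kernel}) recovers a ball operator containing this~$M$ with no spurious lower-order candidate, since such a candidate would be an annihilator of~$v$ of order $<\ord M$; so the coefficient recognition succeeds (as $M \in \QQbar(x)\<\partial>$), the exact check that $M$ right-divides~$L$ passes (it holds automatically, $\sol(M) \subseteq \sol(L)$), and the output is~$M$.

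For~\eqref{item:Annihilator:whole-group}, when $M = L$ and the $\varphi_i$ generate $\bC[G]$ we get $W = \bC[G]\,v = \sol(L) = \bC^r$, so at high precision Lemma~\ref{lem:optimistic-spin-up}(3) already lets the spin-up step detect $\dim W = r$ and the algorithm returns~$L$ before the reconstruction phase, with no constraint on~$t$. The step I expect to be the main obstacle is~\eqref{item:Annihilator:large-t}: turning the minimality characterization of~$M$ into a rigorous proof that, at high precision and for large~$t$, the reconstruction returns exactly~$M$ and nothing spurious — this amounts to bounding how many series terms make an approximate order-$\rho$, degree-$\leq t$ differential relation necessarily exact, and combining this with the ``no missed solutions, equality at high precision'' guarantees of the optimistic kernel computation.
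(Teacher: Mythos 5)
Your proposal follows essentially the same route as the paper's proof: assertion~(1) by inspection of the exact divisibility check, (2) and (4) via Lemma~\ref{lem:optimistic-spin-up} (and Lemma~\ref{lem:optimistic-kernel} for the branch exiting through line~\ref{annihilator:conclusive_hpapprox}), and (3) via the characterization of~$M$ as the least-order annihilator together with convergence of the Hermite--Padé step and LLL reconstruction as $t \to \infty$ and precision increases. Your uniqueness argument for~$M$ at order~$\ord M$ and the remark that the final divisibility check is automatic since $\sol(M) \subseteq \sol(L)$ match the paper's reasoning.

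There is, however, a genuine gap in your treatment of~(3). You split the analysis into $\dim W = r$ (so $M = L$, handled by the spin-up branch) and $\dim W < r$, where you then argue that the reconstruction loop recovers~$M$. But when $\dim W < r$ you silently assume $\ord M < r$. This is not forced: if the fixed generators $\varphi_0, \dots, \varphi_k$ do \emph{not} generate $\bC[G]$, one can have $W = \bC[\varphi_0, \dots, \varphi_k] v \subsetneq \bC[G] v = \bC^r$ yet still $M = L$, so that $\ord M = r$ and the loop over $s = d, \dots, r-1$ never sees an operator that actually annihilates~$f$. The algorithm must then return~$L$ through the degree-bound check $\deg \bm R > B$ on line~\ref{annihilator:conclusive_hpapprox}, and one needs to argue that as $t \to \infty$ the minimum-degree Hermite--Padé approximant's degree necessarily exceeds~$B$ in this case. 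The paper addresses exactly this with the sentence ``When $M = L$, it follows that line~\ref{annihilator:conclusive_hpapprox} is eventually reached as $t \to \infty$.'' You should add the corresponding case. Your discussion of~(2) for the line~\ref{annihilator:conclusive_hpapprox} exit is also somewhat vague about the role of the precomputed bound~$B$; making that explicit would tighten the argument, but the idea is right.
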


\begin{proof}
Assertion~(\ref{item:factor}) is straightforward.
If $L$ is returned on line~\ref{annihilator:conclusive_orbit},
the fact that no $v \in \bm v$ has an annihilator of smaller order
is ensured by Lemma~\ref{lem:optimistic-spin-up}.
Step~\ref{annihilator:hpapprox} amounts to a kernel computation,
so the same conclusion holds if the algorithm terminates
on line~\ref{annihilator:conclusive_hpapprox},
by Lemma~\ref{lem:optimistic-kernel}.
When termination happens on line~\ref{annihilator:return-proper-factor},
the returned~$R$ has order less than~$r$.
This proves~(\ref{item:min-is-L}).
Let $v$~and~$M$ be as in the statement and $V = \bC[G] v$.
Note that $L = M$ if and only if $V = \bC^r$.
At high precision, this is correctly decided on
line~\ref{annihilator:conclusive_orbit} when
$\bC[\varphi_1, \dots, \varphi_k] = \bC[G]$
thanks to Lemma~\ref{lem:optimistic-spin-up},
proving~(\ref{item:Annihilator:whole-group}).
At high precision, Lemma~\ref{lem:optimistic-kernel} ensures that,
after line~\ref{annihilator:hpapprox} is executed with $s=r-1$,
the resulting~$\bm R$ contains an operator~$R$ of order at most $r-1$
and minimum degree such that $R(f) = O((x-x_0)^t)$.
When $M = L$, it follows that line~\ref{annihilator:conclusive_hpapprox}
is eventually reached as $t \to \infty$.
Assume now that $M \in \QQbar(x)\<\partial>$ and $\ord M \neq L$.
At high precision, by Lemma~\ref{lem:optimistic-echelon},
step~\ref{annihilator:orbit} yields a tuple $(\bm e_1, \dots, \bm e_d)$
with $d \leqslant \ord M$.
Line~\ref{annihilator:hpapprox} with $s = \ord M$ then finds an~$\bm R$
with $M \in \bm R$.
By assumption, $M$~has coefficients in~$\QQbar$, so that
the LLL algorithm eventually recovers~$M$ from~$\bm R$
as the radii of the coefficients of~$\bm R$ tend to zero.
This proves~(\ref{item:Annihilator:large-t}).
\end{proof}

The assumption in
Proposition~\ref{prop:Annihilator}\emph{(\ref{item:Annihilator:large-t})}
that the minimal
annihilator of~$f$ has algebraic coefficients is automatically satisfied
when~$L$ has a finite number of factorizations.
Indeed, the $(L_1, L_2)$ with $L = L_2 L_1$ form an algebraic variety defined
over~$\bK$, which is then zero-dimensional.
In the presence of parameterized families of right-hand factors
(like in the example of~$\partial^2$ mentioned in the Introduction),
however, some choices
of~$f$ lead to an annihilator with transcendental coefficients%
\footnote{\label{ft:fix}
  The algorithm from~\cite{vdH-07-ANS} is incorrect as stated for this reason:
  when $\dim K_i > 1$ at step~5 of \texttt{Invariant\_subspace},
  the vector~$\bm v$ chosen from~$K_i$ may correspond to a minimal annihilator
  with transcendental coefficients,
  in which case \texttt{Right\_factor} will loop indefinitely.
  Van der Hoeven recently revised his algorithm to fix this issue (private communication).
}.

\begin{remark}
  When $s=d$, a more direct approach for step~\ref{annihilator:hpapprox} is
  to write~$\bm R$ as a product of first-order factors with power series coefficients;
  see~\cite[Theorem~8]{vdH-16-CSA} for a fast algorithm.
\end{remark}

\begin{remark} \label{rk:approxbasis}
  Due to interval blow-up, getting a precise enough~$\bm R$ at
  step~\ref{annihilator:hpapprox} to be able to proceed may require a large
  working precision.
  Instead, we can compute a minimal approximant basis of
  \[
    (y_0, y'_0, \dots, y^{(r-1)}_0,
    y_1, y'_1, \dots, y^{(r-1)}_1,
    \dots,
    y_{r-1}, y'_{r-1}, \dots, y^{(r-1)}_{r-1})
  \]
  where $(y_0, \dots, y_{r-1})$ is the local basis (consisting of exact series)
  such that $\bm f = \bm u_0 y_0 + \dots + \bm u_{r-1} y_{r-1}$.
  We then search for elements (of a certain maximum degree) of the form
  \[
    ( \bm u_0 q_0, \dots, \bm u_0 q_{r-1},
    \bm u_1 q_0,  \dots, \bm u_1 q_{r-1},
    \dots,
    \bm u_{r-1} q_0, \dots, \bm u_{r-1} q_{r-1})
  \]
  in the module of relations.
  The latter step reduces to solving a linear system over~$\balls[x]$
  given by a matrix mixing exact polynomials and constant ball entries~%
  \cite[\emph{cf.}][]{ChyzakDreyfusDumasMezzarobba}.
\end{remark}

\begin{algorithm}[t]

\caption{Annihilator($L$, $\bm{v}$, $(\bm\varphi_0, \dots, \bm\varphi_k)$, $t$)}
\label{algo:annihilator}

\KwIn{%
  $L\in\bK(x)\< \partial>$ of order~$r$,
  $\bm v \in \balls^r$,
  $\bm \varphi_0, \dots, \bm \varphi_k\in\balls^{r\times r}$,
  $t \in \bZ_{>0}$
 }
\KwOut{a right-hand factor~$R$ of~$L$, or \emph{Inconclusive}}
\BlankLine
Compute the dimension~$d$ and a basis $(\bm e_1, \dots, \bm e_d) \in (\balls^r)^d$
of $\bC[\bm \varphi_0, \dots, \bm \varphi_k]\bm{v}$
in reduced echelon form (Lemma~\ref{lem:optimistic-spin-up})\;
\label{annihilator:orbit}
if $d = r$ then return $L$\;\label{annihilator:conclusive_orbit}
\mbox{Compute the first~$t + r$ terms of the solution~$\bm f \in \balls[[x-x_0]]$}
of~$L$ defined by $(\bm f(x_0), \dots, \smash{\bm f^{(r-1)}}(x_0))^T = \bm v$\;
\label{annihilator:expand}
Compute~$B$ such that any monic right-hand factor of~$L$ has degree~$\leqslant B$
\cite[Section~9]{vHo-97-FDO}\tcp*{precomputable}
\For{$s=d, \dots, r-1$\label{annihilator:loop_start}}
{
Compute a monic $\bm R \in \balls(x)\<\partial>$ of minimum degree such that
$\ord \bm R \leqslant s$ and $\bm R(\bm f) = O((x-x_0)^t)$
using Hermite--Padé approximation\;%
\label{annihilator:hpapprox}
\If{$\deg \bm R < t/(s+1)$}
{
Compute $R \in \bm R \cap \QQbar(x)\<\partial>$
using the LLL algorithm\footnotemark\;
\label{annihilator:LLL}
if $R$ divides $L$ from the right then return~$R$\;
\label{annihilator:return-proper-factor}
\label{annihilator:loop_end}
}
}
\If(\tcp*[f]{can only happen for large~$t$}){$\deg \bm R > B$}
{
  Return~$L$\;\label{annihilator:conclusive_hpapprox}
}
Return \emph{Inconclusive}\;
\end{algorithm}
\footnotetext{To reconstruct an element~$z \in \QQbar$ from a ball~$\bm z$, we
search for an algebraic number~$z \in \bm z \cap \QQbar$ of degree at
most~$\delta \sim ( - \log(\rad(\bm z)))^{1/2}$.}

\begin{remark} \label{rk:reconstruct-ini}
  Another way of limiting the need for Hermite--Padé approximants
  with high-precision ball coefficients is as follows.
  Between lines \ref{annihilator:conclusive_orbit}~and~\ref{annihilator:expand}
  of Algorithm~\ref{algo:annihilator}, we insert a step that attempts to
  reconstruct a vector $e_1 \in \bm e_1 \cap \smash{\QQbar^r}$ using the LLL algorithm.
  If this succeeds, we compute the power series solution~$f_1$ of~$L$
  associated to~$e_1$ and attempt to recover a factor from it.
  In the notation of Proposition~\ref{prop:Annihilator}, this strategy
  yields a proper factor at high precision when
  $\bC[\varphi_0, \dots, \varphi_k] = \bC[G]$.
  Indeed, $\bm e_1$ contains the first vector~$e_1$ of the exact reduced echelon
  basis of~$V$,
  and $V$~is the image in~$\bC^r$ of the whole solution space of~$M$.
  As $M$~has coefficients in~$\QQbar(x)$,
  it admits a basis of solutions whose series expansions at~$x_0$ have
  coefficients in $\QQbar$.
  A solution~$g$ of~$M$ is represented in~$V$ by the
  vector
  $(g(x_0), \dots, g^{(r-1)}(x_0)) \in \QQbar^r$,
  hence $V$~is generated by vectors with entries in~$\QQbar$,
  and therefore the elements of its reduced echelon basis belong
  to~$\smash{\QQbar^r}$ as well.
  (The resulting factor might not be an annihilator of~$f$.
  This is easy to fix if desired.)
\end{remark}

\begin{remark}
  If a right-hand factor of~$L \in \bK(x)\<\partial>$ has
  coefficients in $\bL(x)$ for some extension~$\bL$ of $\bK$, then its
  conjugates under the action of~$\Gal(\bL/\bK)$ are right-hand factors of~$L$
  as well.
  Their lclm~$R$ is a right-hand factor with coefficients in~$\bK(x)$.
  As noted by van Hoeij~\cite[Section~8]{vHo-97-FDO},
  if $L$~is proved to be indecomposable,
  for instance because the eigenring method has failed to factor it,
  then $R$~must be a proper factor.
  In the setting of Remark~\ref{rk:reconstruct-ini},
  this observation allows one to perform the
  Hermite--Padé step over~$\bK$ by replacing~$e_1$ with the average of its
  Galois conjugates.
\end{remark}

\section{Submodules and irreducibility}\label{sec:irreducibility_criteria}

Let $L$, $x_0$, and $G$ be as in the previous section.
We now discuss three different ways of finding proper invariant subspaces under
the monodromy action ($G$-submodules) or proving that none exists.

All three tests follow the same pattern.
We start with a possibly incomplete set of (approximate or exact)
generators of the monodromy group.
When one of the tests is applicable,
either we exploit error bounds to certify the absence of any proper submodule,
which implies that $L$~is irreducible,
or we find an approximation~$\bm v$ of a candidate~$v$ such that
$\bC[G] v \subsetneq \bC^r$,
from which we attempt to reconstruct a factor of~$L$ by
Algorithm~\ref{algo:annihilator}.

Let $\cA \subset \bC^{r\times r}$ be a matrix algebra.
In our applications,
$\cA$~will be the algebra~$\bC[G]$ considered in the previous section
or a sub\-algebra of it.
We will consider both the left action of~$\cA$ on the column space~$\leftCr$
and the right action of~$\cA$ on the row space~$\rightCr$.
If nothing is specified, $\mathbb C^r$ stands for the left $\cA$-module~$\leftCr$.

It is classical~\cite{Lam-98-TBM} that the $\cA$-module $\bC^r$ admits a proper submodule if and
only if $\cA \neq \bC^{r \times r}$.
Note that this criterion provides no proper $\cA$-submodule, even if one exists.

\subsubsection*{Norton's criterion}

The following result is a special case of Norton's irreducibility test,
in the form used in the Holt--Rees variant of the ``Meataxe'' algorithm
for testing the irreducibility of 
modules over finite fields \cite{HR-94-TMI}.
(The general case, allowing for an eigenvalue of multiplicity $>1$,
is not usable over an infinite field.)

\begin{algorithm}[t]

\caption{SimpleEigenvalue($L$, $(\bm\varphi_0, \dots, \bm\varphi_k)$, $t$)}
\label{algo:SimpleEigenvalue}

\KwIn{
  $L\in\bK(x)\< \partial>$ of order~$r$,
  $\bm \varphi_0, \dots, \bm \varphi_k\in\balls^{r\times r}$,
  $t \in \bZ_{>0}$
}
\KwOut{a right-hand factor of~$L$, \emph{Irreducible}, or \emph{Inconclusive}}
\BlankLine

Compute a simple eigenvalue $\bm{\lambda}$ of $\bm \varphi_0$ and an
eigenvector~$\bm v$ of~$\bm \varphi_0$ associated
to~$\bm \lambda$\tcp*{may fail}

$R$ = Annihilator($L$, $\bm{v}$, $(\bm\varphi_0, \dots, \bm\varphi_k)$, $t$)\label{step:SimpleEigenvalue:right annihilator}\;
if $\ord(R)<\ord(L)$ then return $R$\;

Compute $P(x_0)$ as in Lemma~\ref{thm:adjoint}\tcp*{precomputable}

Compute $\bm \chi_0, \dots, \bm \chi_k$
for $\chi_i := P(x_0) \varphi_i^T P(x_0)^{-1}$, \ $0\leqslant i\leqslant k$\;

Compute an eigenvector~$\bm w$ of~$\bm \chi_0$ associated
to~$\bm \lambda$\tcp*{may fail}

$Q$ = Annihilator($L^*$, $\bm{w}$, $(\bm\chi_0, \dots, \bm\chi_k)$, $t$)\;
\label{step:SimpleEigenvalue:left annihilator}
if $\ord(Q)<\ord(L)$ then return $(L^* / Q)^*$\;
else if $R = L$ and $Q = L^*$ then return \emph{Irreducible}\;
else return \emph{Inconclusive}\tcp*{$R$ or $Q$ is \emph{Inconclusive}}

\end{algorithm}

\begin{proposition}  \cite{Parker1984,HR-94-TMI} \label{thm:norton}
Assume that there is $M \in \cA$ having a simple eigenvalue $\lambda$.
Introduce nonzero vectors $v\in\leftCr $ and $w\in\rightCr $
such that $Mv = \lambda v$ and $wM = \lambda w$.
Then, equivalently:
(i)\/~the left $\cA$-module $\leftCr$ is irreducible;
(ii)\/~both $\cA v = \leftCr$ and $w \cA = \rightCr$ hold;
(iii)\/~the right $\cA$-module $\rightCr$ is irreducible.
\end{proposition}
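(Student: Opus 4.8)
The plan is to prove the cycle of implications $(i)\Rightarrow(ii)\Rightarrow(iii)\Rightarrow(i)$, exploiting the symmetry between the two halves of~(ii) under transposition. First I would observe that $M$, viewed as a linear map on $\leftCr$, has a one-dimensional eigenspace for~$\lambda$ because $\lambda$ is a simple root of the characteristic polynomial; dually, $M^T$ acting on $\leftCr$ (equivalently, $M$ acting on the right on $\rightCr$) also has a one-dimensional $\lambda$-eigenspace. So $v$ and $w$ in the statement are each unique up to scalar. A convenient reformulation I would use throughout is that $\cA v$ is the smallest left $\cA$-submodule of $\leftCr$ containing~$v$, and $w\cA$ the smallest right $\cA$-submodule of $\rightCr$ containing~$w$.

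For $(i)\Rightarrow(ii)$: if $\leftCr$ is an irreducible left $\cA$-module, then $\cA v$, being a nonzero submodule, must be all of $\leftCr$. For the second half, I would pass to the right module $\rightCr$. The transpose map $N\mapsto N^T$ sends $\cA$ to the algebra $\cA^T\subset\bC^{r\times r}$, and a right $\cA$-submodule of $\rightCr$ is the same data as a left $\cA^T$-submodule of $\leftCr$. By the classical criterion quoted just above the proposition, $\leftCr$ irreducible as a left $\cA$-module means $\cA=\bC^{r\times r}$, hence $\cA^T=\bC^{r\times r}$, hence $\leftCr$ is irreducible as a left $\cA^T$-module, i.e.\ $\rightCr$ is irreducible as a right $\cA$-module; so $w\cA=\rightCr$. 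This argument also gives $(iii)\Rightarrow(i)$ for free by the same transpose symmetry, so really it suffices to prove $(i)\Rightarrow(ii)$ and $(ii)\Rightarrow(i)$.

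For $(ii)\Rightarrow(i)$ — the crux — suppose $\cA v=\leftCr$ and $w\cA=\rightCr$, and let $W\subset\leftCr$ be a nonzero $\cA$-submodule; I want $W=\leftCr$. The key point is to use the eigenvector~$w$ to pair against~$W$. Consider the linear form defined by $w$ (as a row vector) restricted to~$W$. If $wW\neq 0$, i.e.\ $w u\neq 0$ for some $u\in W$, then since $w\cA=\rightCr$ I can write any target row vector as $w\varphi$ for some $\varphi\in\cA$; applying this to a well-chosen functional and using that $u$ generates via $\cA u\subset W$, one forces $W$ to contain $\cA u$, and then a dimension/pairing count (the $\lambda$-eigenspace of $M$ being one-dimensional and $v$ lying in it) shows $v\in W$, whence $W\supset\cA v=\leftCr$. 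The remaining case is $wW=0$: then $W\subset\ker w$, but $\ker w$ is a proper subspace stable under... here is the subtle spot — it need not be $\cA$-stable in general. The correct way around it, and the step I expect to be the main obstacle, is the standard Meataxe argument: decompose $\leftCr = \ker(M-\lambda)\oplus \operatorname{im}(M-\lambda)^{?}$ more carefully, or rather use that $w$ is the $\lambda$-eigenvector of $M$ acting on the right, so $wM=\lambda w$ implies $w(M-\lambda)=0$, and the image of $M-\lambda$ on the left is exactly $\ker w$ (both have dimension $r-1$ by simplicity of~$\lambda$). Since $\operatorname{im}(M-\lambda)$ is an $\cA$... again not quite — so instead I would argue: if $W\subset\ker w=\operatorname{im}(M-\lambda)$, then applying $\cA v=\leftCr$, write $v=\varphi u'+\dots$; the cleanest route is to consider $W':=\{u: wu'=0 \text{ for all }u'\in\cA u\}$-type annihilators and invoke that $w$ generates $\rightCr$ under $\cA$ to contradict $W\neq 0$. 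I would follow the Holt–Rees write-up~\cite{HR-94-TMI} for this last manipulation, which shows that $wW=0$ together with $w\cA=\rightCr$ forces $W=0$, contradiction; hence $wW\neq0$ and the previous paragraph applies. Finally $(iii)\Rightarrow(i)$ follows by transposition as noted, closing the cycle.
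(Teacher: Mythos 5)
Your overall skeleton---using transposition/duality to tie (i), (ii), (iii) together and treating (ii)$\Rightarrow$(i) as the crux---is a legitimate reading of the statement, and your closing observation that $wW=0$ together with $w\cA=\rightCr$ forces $W=0$ is on the right track. But the proposal never actually proves the crux. In the case $wW\neq 0$ you assert that ``a dimension/pairing count\dots shows $v\in W$'' without giving one, and it does not follow from what precedes: you would need, say, that the spectral projection $P$ onto the eigenline $\bC v$ along the other generalized eigenspaces of~$M$ lies in $\cA$ (true, because $\lambda$ is simple so $P$ is a polynomial in~$M$), and then that $wP=w$, so $wu\neq0$ forces $Pu$ to be a nonzero scalar multiple of $v$ lying in $W$. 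You also open and abandon several incorrect lines (``Since $\operatorname{im}(M-\lambda)$ is an $\cA$\dots\ again not quite''---indeed $\operatorname{im}(M-\lambda I)$ is only $M$-invariant, not $\cA$-invariant) and ultimately defer the key manipulation to the cited reference rather than supplying it. As written this is a strategy with a hole exactly at the step you yourself flag as ``the main obstacle.''

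For contrast, the paper's proof avoids both the projection and Burnside's theorem. It introduces the pairing $\langle w,v\rangle=\sum_i w_iv_i$ and records that $F\mapsto F^\perp$ is an inclusion-reversing bijection between left and right $\cA$-submodules with $F^{\perp\perp}=F$, so a proper right submodule $w\cA\subsetneq\rightCr$ immediately yields the proper left submodule $(w\cA)^\perp$; this is how (i)$\Rightarrow$(ii) is handled, with no appeal to $\cA=\bC^{r\times r}$. For the converse, given a proper submodule $U$ with $v\notin U$, one has $\ker(M-\lambda I)\cap U=\{0\}$ (since $\ker(M-\lambda I)=\bC v$ by simplicity), so $(M-\lambda I)|_U$ is injective, hence $(M-\lambda I)U=U$ by finite dimension; as $w(M-\lambda I)=0$, every $u\in U$ satisfies $\langle w,u\rangle=0$, so $w\in U^\perp$ and $w\cA\subset U^\perp\subsetneq\rightCr$. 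That single observation---replace $U$ by $(M-\lambda I)U$ and let $w$ annihilate it---is precisely what your sketch is missing.
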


\begin{proof}
For $w\in \rightCr$ and $v \in \leftCr$,
write $\<w, v>$ for $\sum_{i=1}^r w_iv_i$.
For a subspace $F \subset \leftCr$, we denote by
$F^\perp := \{w \in\rightCr \mid \forall u \in F, \<w, u>=0 \}$
the orthogonal of $F$.
We define symmetrically the orthogonal~$G^\perp$ of a
subspace $G \subset \rightCr$.
For any subspace $F \subset \leftCr$, $F = F^{\perp\perp}$ holds and $F$ is a
left $\cA$-module if and only $F^\perp$ is a right $\cA$-module; similarly for
subspaces $G \subset \rightCr$.

Assume (ii)~does not hold.
If $0 \subsetneq w \cA \subsetneq \rightCr$,
then $\leftCr \supsetneq (w \cA)^\perp \supsetneq 0$,
and $(w \cA)^{\perp}$ is a proper submodule of~$\leftCr$.
Otherwise, $0 \neq \cA v \neq \leftCr$,
making $\cA v$ a proper submodule.
So $\leftCr $~is a reducible module in all cases.

Conversely, assume (i)~does not hold,
and let $U$ be a proper $\cA$-submodule of~$\leftCr$.
The equality $\ker (M-\lambda I_r) = \cA v$ holds
because $\lambda$~is a simple eigenvalue.
If $\cA v \subset U$, then $\cA v \neq \leftCr$.
Otherwise, $\ker (M-\lambda I_r) \cap U = \{0\}$.
Since $(M-\lambda I_r)U \subset U$, we have $(M-\lambda I_r)U = U$ by finite
dimension.
Hence, for all $u\in U$, there is $u'\in U$ such that
$\<w, u> =  \<w, (M-\lambda I_r)u'> = \<w(M-\lambda I_r), u'> = \<0, u'>=0$.
Therefore $w \in U^\perp$, so $w \cA \subset U^\perp$ and $w\cA \neq \rightCr$.
\end{proof}

In the special case where $M$~is a formal monodromy matrix,
we recover van Hoeij's local-to-global method.
Indeed, at a regular singular point,
the \emph{exponential parts} defined in \cite[Section 3]{vHo-97-FDO}
correspond to the eigenvalues of the formal monodromy matrix.
Van Hoeij observes that one can find a factorization or prove that there is none
as soon as there is an exponential part~$e$ of multiplicity~1 at some singular point,
because $e$ is then an exponential part of either $L_1$~or~$L_2$ but not both
in a factorization $L = L_2 L_1$.
To decide whether $e$ is an exponential part of a right-hand factor, van Hoeij
computes a series solution~$f$ associated to~$e$ and searches for an
annihilator of~$f$ of order smaller than~$r$ using Hermite--Padé approximants.
Thanks to degree bounds, it is possible to
ensure that $e$ is not an exponential part of any right-hand factor.
As noted in Section~\ref{sec:monodromy}, this is equivalent to $\bC[G] f$
being~$\bC^r$.
One can decide if $e$~is an exponential part of a left-hand factor in a similar
way, by passing to the adjoint operator.

In the setting where $\cA = \bC[G]$,
we can test point~(ii) of Proposition~\ref{thm:norton} in two different ways:
we either compute bases of $\cA v$ and $w \cA$ by saturation,
or search for annihilators satisfying certain degree bounds as in van
Hoeij's method.
The first method is typically more efficient when a full basis of~$\mathcal A$
is available,
but the second has the advantage of being applicable
even if only part of the monodromy matrices have been computed.
Compared to van Hoeij's method, the numerical test applies to a larger class of
operators because an element of~$\cA$ can have a simple eigenvalue even if the
generators only have multiple eigenvalues.

As we will now show, performing either variant of this test using optimistic
arithmetic can prove irreducibility, or provide a candidate invariant
subspace, depending on the reducibility of the operator.

\begin{proposition}\label{prop:SimpleEigenvalue}
  Suppose that $L$~is a monic Fuchsian operator admitting finitely many distinct
  right-hand factors.
  Fix $\varphi_0, \dots, \varphi_k \in \bC[G]$
  and let $R$ be the output of \/
  {\normalfont $\text{SimpleEigenvalue}(L, (\bm\varphi_0, \dots, \bm\varphi_k), t)$}
  where $\varphi_i \in \bm \varphi_i$.
  If $R = \text{\emph{Irreducible}}$, then $L$ is irreducible.
  If $R$~is an operator, then~$R$ is a proper right-hand factor of~$L$.
  Assume further that $\varphi_0$ has a simple eigenvalue.
  Then, at high precision:
  \begin{inparaenum}
    \item \label{item:SimpleEigenvalue:large-t}
    $R$ is either a factor or \emph{Irreducible} for large~$t$;
    \item \label{item:SimpleEigenvalue:whole-group}
    if $L$~is irreducible and the~$\varphi_i$ generate~$\bC[G]$,
    the output is \emph{Irreducible}.
  \end{inparaenum}
\end{proposition}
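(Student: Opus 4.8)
The plan is to verify the four assertions of Proposition~\ref{prop:SimpleEigenvalue} by carefully tracking the control flow of Algorithm~\ref{algo:SimpleEigenvalue} and reducing each claim to the already-established properties of \texttt{Annihilator} (Proposition~\ref{prop:Annihilator}) together with Norton's criterion (Proposition~\ref{thm:norton}) and the optimistic-arithmetic lemmas of Section~\ref{sec:optimic_ball_arithmetic}.

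\textbf{Soundness (the non-high-precision claims).}
First I would treat the case $R = \text{\emph{Irreducible}}$. This output is produced only when both calls to \texttt{Annihilator} return the full operators $L$ and $L^*$ respectively. By Proposition~\ref{prop:Annihilator}\emph{(\ref{item:min-is-L})}, the first return value $R=L$ guarantees that no $v \in \bm v$ has an annihilator of order $<r$, i.e.\ $\bC[G]v = \bC^r$ for the chosen eigenvector $v$ of $M=\varphi_0$ associated to the simple eigenvalue~$\lambda$; likewise $Q=L^*$ gives $w\,\bC[G] = \bC^{1\times r}$ for the eigenvector $w$ of $\chi_0 = P(x_0)\varphi_0^T P(x_0)^{-1}$ associated to~$\lambda$. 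By Lemma~\ref{thm:adjoint}, $\Gal(L^*,x_0)$ is $P(x_0)(\cdot^{-1})^T P(x_0)^{-1}$ applied to $G$, so a left eigenvector of $\chi_0$ with eigenvalue $\lambda$ corresponds, after transposing, to a left eigenvector of $\varphi_0$ with eigenvalue $\lambda$ — that is, exactly the row vector $w$ in the statement of Proposition~\ref{thm:norton} applied with $\cA = \bC[G]$ and $M = \varphi_0$. Hence both conditions of point~(ii) of Norton's criterion hold, so $\bC^r$ is an irreducible $\bC[G]$-module, and by Corollary~\ref{cor:monodromy} (invariant subspaces = solution spaces of right-hand factors) $L$ is irreducible. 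Conversely, if the algorithm returns an operator, it is returned either as $R$ on the line "if $\ord(R)<\ord(L)$" — in which case Proposition~\ref{prop:Annihilator}\emph{(\ref{item:factor})} makes it a right-hand factor of $L$ of order $<r$, hence proper — or as $(L^*/Q)^*$ where $Q$ is a proper right-hand factor of $L^*$; here one checks that the adjoint of an exact left quotient is an exact right-hand factor of $(L^*)^* = L$, again of order $<r$ and therefore proper. Note no high-precision hypothesis is used for any of this, since \texttt{Annihilator}'s outputs are always either \emph{Inconclusive} or genuine factors, and the final exact divisibility check inside \texttt{Annihilator} on line~\ref{annihilator:return-proper-factor} certifies correctness.

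\textbf{Completeness under the simple-eigenvalue hypothesis.}
Now assume $\varphi_0$ has a simple eigenvalue~$\lambda$. At high precision, Lemma~\ref{lem:optimistic-roots} lets us verify that $\lambda$ is simple, so the step "Compute a simple eigenvalue $\bm\lambda$ of $\bm\varphi_0$" no longer fails; Lemma~\ref{lem:optimistic-kernel} then produces an enclosure $\bm v \ni v$ of a genuine eigenvector, and symmetrically $\bm w \ni w$ for $\chi_0$ (its eigenvalue is the same $\lambda$ since $\chi_0$ is conjugate to $\varphi_0^T$). For~\emph{(\ref{item:SimpleEigenvalue:large-t})}: let $M_v$ be the minimal annihilator of $v$ and $M_w$ that of $w$ (as a solution of $L^*$). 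Since $L$ has finitely many factorizations, the discussion following Proposition~\ref{prop:Annihilator} shows $M_v, M_w \in \QQbar(x)\<\partial>$. If $M_v \neq L$, then by Proposition~\ref{prop:Annihilator}\emph{(\ref{item:Annihilator:large-t})} the first \texttt{Annihilator} call returns $M_v$ for large~$t$ and high precision, an operator of order $<r$, so \texttt{SimpleEigenvalue} returns a factor. Symmetrically if $M_w \neq L^*$, the second call returns a factor and we output $(L^*/M_w)^*$. The remaining possibility is $M_v = L$ and $M_w = L^*$, i.e.\ $\bC[G]v = \bC^r$ and $w\,\bC[G] = \bC^{1\times r}$; then by Proposition~\ref{prop:Annihilator}\emph{(\ref{item:Annihilator:large-t})} again (the "$M=L$" output branch, reached when $\deg \bm R > B$ for large $t$) both calls return $L$ and $L^*$ respectively, so the algorithm outputs \emph{Irreducible}. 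In every case the output at high precision and large $t$ is a factor or \emph{Irreducible}, proving~\emph{(\ref{item:SimpleEigenvalue:large-t})}. For~\emph{(\ref{item:SimpleEigenvalue:whole-group})}: if $L$ is irreducible, then by Norton's criterion (Proposition~\ref{thm:norton}) necessarily $\bC[G]v = \bC^r$ and $w\,\bC[G] = \bC^{1\times r}$, i.e.\ $M_v = L$ and $M_w = L^*$; when moreover the $\varphi_i$ generate $\bC[G]$, Proposition~\ref{prop:Annihilator}\emph{(\ref{item:Annihilator:whole-group})} says both \texttt{Annihilator} calls return $L$ and $L^*$ at high precision with no assumption on $t$, and \texttt{SimpleEigenvalue} outputs \emph{Irreducible}.

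\textbf{Main obstacle.}
I expect the delicate point to be the bookkeeping around the adjoint, specifically confirming that the eigenvector $w$ of $\chi_0$ with eigenvalue $\lambda$, after the transport through $P(x_0)$ and transposition of Lemma~\ref{thm:adjoint}, is precisely a row vector $w$ satisfying $w\varphi_0 = \lambda w$, so that Proposition~\ref{thm:norton} can be invoked verbatim with $v$ and~$w$ the two eigenvectors of the \emph{same} matrix $M = \varphi_0$. One must check that $\lambda$ being a simple eigenvalue of $\varphi_0$ forces $w$ to be (up to scalar) unique and nonzero, that $w\,\bC[G] = \rightCr$ is the condition actually certified by $Q = L^*$ — this uses that passing to the adjoint and taking the last row of the Wronskian inverse (as in the proof of Lemma~\ref{thm:adjoint}) identifies the orbit $w\,\bC[\Gal(L,x_0)]$ with the solution-space orbit of the corresponding solution of $L^*$ — and that the conjugation $\varphi \mapsto P(x_0)\varphi^T P(x_0)^{-1}$ carries $\bC[G]$ onto $\bC[\Gal(L^*,x_0)]$ as an algebra, so "the $\varphi_i$ generate $\bC[G]$" is equivalent to "the $\chi_i$ generate $\bC[\Gal(L^*,x_0)]$" — the fact needed to apply Proposition~\ref{prop:Annihilator}\emph{(\ref{item:Annihilator:whole-group})} to the second call. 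Everything else is a routine matching of each \texttt{return} statement of Algorithm~\ref{algo:SimpleEigenvalue} to the corresponding clause of Propositions~\ref{prop:Annihilator} and~\ref{thm:norton}.
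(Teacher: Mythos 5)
Your proof is correct and follows essentially the same route as the paper's: it matches each \texttt{return} of Algorithm~\ref{algo:SimpleEigenvalue} to the corresponding clause of Proposition~\ref{prop:Annihilator}, invokes Norton's criterion (Proposition~\ref{thm:norton}) with the algebra $\bC[G]$ once both orbit conditions are certified, and uses Lemma~\ref{thm:adjoint} together with $G=G^{-1}$ to see that the $\chi_i$ lie in $\bC[\Gal(L^*,x_0)]$ and that the column orbit $\bC[\chi_0,\dots,\chi_k]w$ transports to the row orbit of a left eigenvector of $\varphi_0$. You spell out the adjoint bookkeeping that the paper compresses into one sentence (and you have a small slip in one sentence calling $w$ a \emph{left} eigenvector of $\chi_0$ where the algorithm in fact computes a column eigenvector, though your subsequent use of it is correct), but the decomposition of the argument and the key lemmas invoked are the same as the paper's.
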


\begin{proof}
Assume that the computation of the eigenvalues of~$\bm \varphi_0$
finds an eigenvalue~$\bm{\lambda}$ of multiplicity~$1$.
Lemma~\ref{lem:optimistic-roots} ensures that $\varphi_0$ admits a simple eigenvalue
$\lambda \in \bm{\lambda}$,
and, by Lemma~\ref{lem:optimistic-kernel},
the eigenvectors $v$~of~$\varphi_0$, $w$~of~$\chi_0$ for~$\lambda$
belong to the respective computed eigenvectors
$\bm{v}$~of~$\bm\varphi_0$, $\bm{w}$~of~$\bm \chi_0$ for $\bm{\lambda}$.
By Proposition~\ref{prop:Annihilator},
the call to Annihilator on line~\ref{step:SimpleEigenvalue:right annihilator}
(if it succeeds) either yields a proper factor or proves that the minimal
annihilator of~$v$ is~$L$.
Since the group generated by the $\varphi_i$ is the same as the one generated
by the $\varphi_i^{-1}$, the~$\chi_i$ are elements of
$\bC[\Gal(L^*, x_0)]$ by Lemma~\ref{thm:adjoint},
so a similar reasoning applies to
line~\ref{step:SimpleEigenvalue:left annihilator}.
If the minimal annihilators turn out to be $L$~and~$L^*$,
then point~(ii) of Proposition~\ref{thm:norton} holds with
$\cA = \bC[\varphi_0, \dots, \varphi_k]$, hence also with $\cA = \bC[G]$,
and we can conclude that $L$~is irreducible.
Finally, at high precision, when $\varphi_0$~does have a simple eigenvalue,
all numerical steps succeed, and assertions
(\ref{item:SimpleEigenvalue:large-t})--(\ref{item:SimpleEigenvalue:whole-group})
follow from assertions
(\ref{item:Annihilator:large-t})--(\ref{item:Annihilator:whole-group})
in Proposition~\ref{prop:Annihilator}.
\end{proof}

\subsubsection*{One-dimensional eigenspaces}

It is not unusual in applications to encounter operators whose local monodromy
matrices have a single eigenvalue, yet with a one-dimensional eigenspace
(``MUM points'').
The following test is useful in particular for dealing with combinations of
such operators.
As Norton's criterion adapts to van Hoeij's method,
so too does this next test sometimes apply to a formal monodromy matrix.
It could therefore also be used in a purely symbolic factoring algorithm.

\begin{proposition} \label{thm:marc}
Assume that there is $M \in \cA$
whose eigenspaces $E_1, \dots, E_\ell$ are all 1-dimensional.
Let $v_i\in\bC^r $ satisfy $E_i = \bC v_i$ for each $1\leqslant i \leqslant \ell$.
Then $\bC^r $ is an irreducible $\cA$-module if and only if\/
$\cA v_i=\bC^r $ for all $1 \leqslant i \leqslant \ell$.
\end{proposition}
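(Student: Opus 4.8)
The plan is to mimic the structure of the proof of Proposition~\ref{thm:norton}, but to exploit the fact that every eigenspace of $M$ is one-dimensional rather than isolating a single simple eigenvalue. The easy direction is immediate: if $\cA v_i = \bC^r$ for some $i$, then any nonzero submodule containing $v_i$ is all of $\bC^r$, but I need the converse form, so I will instead argue the contrapositive of ``irreducible $\Rightarrow$ all $\cA v_i = \bC^r$''. If some $\cA v_i \neq \bC^r$, then since $v_i \neq 0$ the subspace $\cA v_i$ is a proper nonzero $\cA$-submodule, so $\bC^r$ is reducible. That settles one implication with essentially no work.

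For the other implication --- if all $\cA v_i = \bC^r$, then $\bC^r$ is irreducible --- I would argue by contradiction. Suppose $U \subset \bC^r$ is a proper nonzero $\cA$-submodule. The key observation is that $U$ is $M$-stable, hence $M|_U$ is a well-defined endomorphism of $U$, and since we are over~$\bC$ it has an eigenvector: there is some nonzero $u \in U$ and some eigenvalue $\mu$ of $M$ with $Mu = \mu u$. Then $u$ lies in the $\mu$-eigenspace $E_j$ of $M$ (for the index $j$ with eigenvalue $\mu$), which by hypothesis is one-dimensional, so $u$ is a scalar multiple of $v_j$. Therefore $v_j \in U$, and consequently $\cA v_j \subset U \subsetneq \bC^r$, contradicting the assumption $\cA v_j = \bC^r$. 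Hence no proper nonzero submodule exists and $\bC^r$ is irreducible.

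The only subtlety, and the step I expect to need the most care, is making sure the eigenvalue $\mu$ of $M|_U$ is genuinely among the eigenvalues of $M$ on the full space (so that it corresponds to one of the listed eigenspaces $E_1, \dots, E_\ell$): this is clear since every eigenvalue of a restriction of $M$ to an invariant subspace is an eigenvalue of $M$, and the eigenvector $u \in U$ is then in particular an eigenvector of $M$ on $\bC^r$, so $u \in E_j$ for the corresponding $j$. The rest is the routine submodule bookkeeping already used in Proposition~\ref{thm:norton}. Unlike that proposition, no passage to the dual or orthogonal module is needed here, because the one-dimensionality of \emph{all} eigenspaces of $M$ lets us locate a generator $v_j$ inside any nonzero submodule directly.
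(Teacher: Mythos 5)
Your proof is correct and follows essentially the same route as the paper's: the paper's one-sentence argument that any nonzero invariant subspace must intersect some eigenspace of $M$ nontrivially is exactly the observation you spell out via the eigenvector $u$ of $M|_U$ and the one-dimensionality of $E_j$. The extra detail you provide (the easy direction and the explicit contradiction) is just an elaboration of the same idea.
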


\begin{proof}
  The eigenvalues of the restriction of~$M$ to an invariant subspace
  are eigenvalues of~$M$,
  so any nonzero invariant subspace must intersect at least one eigenspace
  of~$M$ in a nontrivial way.
\end{proof}

Let us explain why this test can again prove the irreducibility at high
precision.
We denote by~$\bm \lambda_1, \dots, \bm \lambda_\ell$ the eigenvalues of a
ball approximation~$\bm{M}$ of an element $M\in \cA$, and we assume that,
for each $1\leqslant i\leqslant \ell$:
\begin{inparaenum}
\item the optimistic computation of
$\ker(\bm{M} - \bm \lambda_i I_r)$ returns a single vector $\bm v_i$, and
\item the optimistic computation of the orbit of $\bm v_i$ returns
$r$~independent vectors.
\end{inparaenum}
Then all the eigenspaces of~$M$ are 1-dimensional and one has
$\cA v=\bC^r $ for each eigenvector~$v$ of~$M$.
Indeed, consider an eigenvalue~$\mu$ of~$M$.
Since $\mu \in \bm \lambda_i$ for some~$i$, there exists $v\in\bm v_i$
such that $\ker(M-\mu I_r) \subset \bC v$.
But $\ker(M-\mu I_r)\neq\{0\}$ so $\ker(M-\mu I_r) = \bC v$.
Next,
$\cA v=\bC^r $ thanks to the computation of the orbit of~$\bm v_i$.
Note that all the distinct eigenvalues of~$M$ do not need to be isolated in
different~$\bm \lambda_i$.

This leads to a procedure OneDimEigenspaces, which we omit, with similar
correctness properties as SimpleEigenvalue.

\subsubsection*{Van der Hoeven's algorithm revisited}

The following result is based on the ideas introduced in \cite{vdH-07-ANS}.
It allows us to deal with the cases that cannot be handled by the two previous
criteria.

\begin{proposition} \label{thm:vdH_red_case}
Assume that all the matrices of $\cA$ have at least one multiple eigenvalue.
Consider $M\in\cA$ with a maximal number of eigenvalues.
Denote by~$\lambda$ one of its multiple eigenvalues, by~$E$ the generalized
eigenspace of~$M$ for~$\lambda$, that is, $E=\ker((M-\lambda I_r)^r)$, and
by~$F$ the sum of the other generalized eigenspaces of~$M$, so that
$\bC^r = E\oplus F$.
Let $K := \{ v \in E \mid \forall N \in \cA, PNv \in \bC v \}$ where $P \in \cA$ denotes the
projection onto $E$ along~$F$.

Then $K \neq \{0\}$ and $\cA v$ is a proper
$\cA$-submodule of $\bC^r $ for any nonzero $v\in K$.
In particular, the $\cA$-module~$\bC^r$ is reducible.
\end{proposition}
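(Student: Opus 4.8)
The plan is to show first that $K$ is nonzero, and then that $\cA v$ is a proper submodule for any nonzero $v \in K$. For the first part, I would argue that $K$ contains a common eigenvector of the restricted operators $\{PN|_E : N \in \cA\}$ on the space $E$. The key observation is that this family of operators, acting on $E$, cannot be all of $\operatorname{End}(E)$: if it were, then $E$ would be an irreducible module for the algebra $\cB := P\cA P|_E \subset \operatorname{End}(E)$, but I claim this contradicts the maximality assumption on the number of eigenvalues of $M$. Indeed, if $\cB = \operatorname{End}(E)$, then in particular $\cB$ contains an element with $\dim E$ distinct eigenvalues; lifting this via $P$ and combining (using the direct sum decomposition $\bC^r = E \oplus F$ and an element of $\cA$ that separates $E$ from $F$ with distinct eigenvalues on $F$, which exists since $M|_F$ can be perturbed within $\cA$) would yield an element of $\cA$ with strictly more eigenvalues than $M$, since $M$ has the multiple eigenvalue $\lambda$ on $E$. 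Once $\cB \neq \operatorname{End}(E)$, Burnside's theorem (the classical result cited as \cite{Lam-98-TBM}) gives that $E$ has a proper $\cB$-submodule, and by descending to a minimal one and picking any nonzero vector in it that is a common eigenvector — which exists over $\bC$ — we obtain a nonzero $v \in E$ with $PNv \in \bC v$ for all $N \in \cA$, i.e.\ $v \in K$.

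For the second part, fix a nonzero $v \in K$. The orbit $\cA v$ is certainly a nonzero $\cA$-submodule (it contains $v = I_r v$), so it suffices to show $\cA v \neq \bC^r$. The idea is that $P(\cA v) \subset \bC v$: for any $N \in \cA$, $P(Nv) \in \bC v$ by definition of $K$, and since $P \in \cA$ we have $PNv \in \cA v$, so applying $P$ to any element $Nv$ of a spanning set of $\cA v$ lands in $\bC v$; by linearity $P(\cA v) \subset \bC v$. Hence $P(\cA v)$ has dimension at most $1$, whereas $P(\bC^r) = E$ has dimension $\geqslant 2$ (because $\lambda$ is a multiple eigenvalue, so $\dim E \geqslant 2$). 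Therefore $\cA v \subsetneq \bC^r$, and $\cA v$ is a proper submodule. This proves that $\bC^r$ is reducible.

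The step I expect to be the main obstacle is the nonvanishing of $K$, specifically the careful argument that $\cB = P\cA P|_E = \operatorname{End}(E)$ would contradict the maximality of the number of eigenvalues of $M$. One has to be precise about how an element of $\operatorname{End}(E)$ with $\dim E$ distinct eigenvalues lifts to an element of $\cA$, and how to simultaneously control its behavior on $F$ so that the total eigenvalue count strictly exceeds that of $M$; the cleanest route is probably to consider $M + \varepsilon \tilde{N}$ for small $\varepsilon$, where $\tilde{N} \in \cA$ restricts on $E$ to a generic diagonalizable element (available since $\cB = \operatorname{End}(E)$) and is negligible on $F$, then argue that for generic small $\varepsilon$ this has $\dim E$ distinct eigenvalues near $\lambda$ plus the unchanged eigenvalues of $M|_F$, hence strictly more eigenvalues than $M$. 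A subtlety is ensuring $\tilde{N}$ can be taken to vanish (or nearly so) on $F$ while being prescribed on $E$; this again follows from Burnside applied to $\cA$ acting on $\bC^r$ together with the fact that $P \in \cA$, since $P \cdot \cA \cdot (I_r - P)$-type components can be adjusted, but making this fully rigorous is where the real work lies.
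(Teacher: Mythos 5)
Your second half is fine and essentially matches the paper: from $PNv\in\bC v$ for all $N$ one gets $P(\cA v)\subset\bC v$, whereas $P(\bC^r)=E$ has dimension $\geqslant 2$, so $\cA v\subsetneq\bC^r$.

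The first half has a genuine gap in the step from ``$\cB:=P\cA P|_E\subsetneq\operatorname{End}(E)$'' to ``there is a common eigenvector in $E$''. Burnside's theorem does give you a proper $\cB$-submodule $W\subset E$, and you can take $W$ minimal, but a minimal submodule need only be $\cB$-irreducible, not one-dimensional. If $\dim W\geqslant 2$, then (again by Burnside) the image of $\cB$ in $\operatorname{End}(W)$ is all of $\operatorname{End}(W)$ and no vector of $W$ is a common eigenvector; a standard example is $\cB$ the block-upper-triangular algebra with $2\times 2$ blocks. So ``picking any nonzero vector in it that is a common eigenvector'' is not justified by what precedes it, and this is exactly the point where the real content of the proposition lies.

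The paper's proof establishes a much stronger structural fact than $\cB\neq\operatorname{End}(E)$: using the maximality of the number of eigenvalues of $M$ together with the projectors $P$ (polynomial in $M$) and $Q$ (polynomial in $PNP$), it shows that \emph{every} restricted operator $\varphi_N:v\mapsto PNv$ on $E$ has a \emph{unique} eigenvalue $\lambda_N$. After subtracting these scalars one obtains a set $\cA_E$ of \emph{nilpotent} endomorphisms of $E$, which the paper then checks is closed under composition, so that Levitski's theorem (a semigroup of nilpotents is simultaneously triangularizable) yields a common kernel vector, i.e.\ a nonzero element of $K$. Your perturbation idea ($M+\varepsilon\tilde N$) is in the same spirit as the paper's $M+\alpha QP$, but the conclusion you draw from it ($\cB\subsetneq\operatorname{End}(E)$) is too weak; you need the one-eigenvalue property for each $\varphi_N$, and then a result of Levitski type, to produce the common eigenvector.
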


\begin{proof}
Let $N\in\cA$ and $\varphi$ be the endomorphism of $E$ defined by
$\varphi(v) := PNv$.
Note that $PNv = PNPv$ for any $v\in E$.
Let us show that $\varphi$ has a unique eigenvalue.
Otherwise, take a nonzero eigenvalue~$\mu$ of~$\varphi$.
Hence $\mu$ is also an eigenvalue of $PNP \in \cA$.
Denote by~$E_\mu$ the generalized eigenspace of~$PNP$ for~$\mu$, by~$G$ the sum
of the other generalized eigenspaces of~$PNP$ and by~$Q$ the projector
onto~$E_\mu$ along $G$.
It is then classical \cite[A.VII.31, Prop.~3]{Bourbaki-1990-A2}
that the projector~$P$, respectively~$Q$,
can be written as a polynomial in~$M$, respectively in~$PNP$, so $P$ and $Q$
belong to~$\cA$.
Hence $QP$ is the projector onto $E\cap E_\mu$ along $(E\cap G) \oplus F$.
Since $E\cap E_\mu \subsetneq E$,
we observe that $M + \alpha QP$ has more eigenvalues than~$M$
for any~$\alpha$ such that $\lambda + \alpha$ is not an eigenvalue of~$M$;
this is in contradiction with the assumption made on~$M$.

Define $\cA_E := \{\varphi_N - \lambda_N \id_E \, ; \, N \in \cA \}$, where
$\varphi_N$ is the endomorphism of $E$ defined by $\varphi_N(v) := PNv$ and
$\lambda_N$ is its unique eigenvalue, so that
$K = \bigcap_{n\in \cA_E} \ker(n)$.
Owing to a result of Levitski \cite[Theorem~35, p.~135]{Kap-72-FR} that states
that a semigroup of nilpotent endomorphisms is simultaneously triangularizable,
showing~$K\neq\{0\}$ reduces to showing that $\cA_E$ is stable by composition.
For all $N, R\in \cA$, we have
$(\varphi_{N} - \lambda_{N}\id_E)(\varphi_{R} - \lambda_{R}\id_E) = \varphi_{S} + \lambda_{N}\lambda_{R}\id_E$
where $S := NPR - \lambda_{N}R - \lambda_{R}N \in \cA$.
Applying the equality of endomorphisms to any nonzero eigenvector~$v$ of~$\varphi_R$
shows $\lambda_{S} = -\lambda_{N}\lambda_{R}$.

For the statement on~$\cA v$, we proceed by contraposition.
If $v\in K$ satisfies $\cA v = \bC^r$, there is $N\in \cA$ such that
$Nv \in E \backslash \bC v$ because the dimension of~$E$ is at least~$2$, so
$v \notin K$.
\end{proof}

This proposition also implies an irreducibility criterion
(``if some $M \in \cA$ has only simple eigenvalues and $\cA v=\bC^r $ for each~$v$
in a basis of eigenvectors, then $\bC^r$ is irreducible''),
but this criterion is weaker than Norton's.
Again we omit the corresponding procedure MultipleEigenvalue,
which computes the space~$K$ of
Proposition~\ref{thm:vdH_red_case}, then calls Annihilator
(Algorithm~\ref{algo:annihilator}) on any of its nonzero elements.
Since $K$ can be written as an intersection of kernels, the convergence of
its computation at high precision is ensured by
Lemma~\ref{lem:optimistic-kernel}.

\begin{proposition}
  Suppose that $L$~is a monic Fuchsian operator admitting finitely many distinct
  right-hand factors.
  Fix $\varphi_0, \dots, \varphi_k \in \bC[G]$
  and let $R = \text{\normalfont MultipleEigenvalue}(L, (\bm\varphi_0, \dots, \bm\varphi_k), t)$
  where $\varphi_i \in \bm \varphi_i$.
  Then $R$ is either the special value \emph{Inconclusive} or a proper
  right-hand factor of~$L$.
  Assume additionally that $L$~is reducible and that $\varphi_0$ has a maximal
  number of eigenvalues among the elements of~$\bC[G]$.
  Then, at high precision,
  if $\varphi_0, \dots, \varphi_k$ generate $\bC[G]$ and $t$~is large enough,
  the algorithm neither fails nor returns \emph{Inconclusive}.
\end{proposition}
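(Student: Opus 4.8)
The plan is to mirror the structure of the proof of Proposition~\ref{prop:SimpleEigenvalue}, substituting the appropriate correctness properties of MultipleEigenvalue for those of SimpleEigenvalue. First I would establish the unconditional part: whenever MultipleEigenvalue returns an operator~$R$, that~$R$ is a proper right-hand factor of~$L$. By construction, the procedure computes a ball enclosure~$\bm K$ of a subspace using iterated kernel computations (Lemma~\ref{lem:optimistic-kernel}), picks a nonzero $\bm v$ from it, and calls Annihilator($L$, $\bm v$, $(\bm\varphi_0,\dots,\bm\varphi_k)$, $t$). By Proposition~\ref{prop:Annihilator}\emph{(\ref{item:factor})}, this call returns either \emph{Inconclusive} or a genuine right-hand factor $R \in \QQbar(x)\<\partial>$ of~$L$, since the divisibility check on line~\ref{annihilator:return-proper-factor} is performed in exact arithmetic. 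If that factor has order~$r$, i.e.\ equals~$L$, MultipleEigenvalue discards it and returns \emph{Inconclusive} (or propagates an error). So the only operator it ever outputs is a factor of order strictly less than~$r$, hence proper. This settles the first claim with no hypothesis on reducibility or on the $\varphi_i$.

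For the conditional part, assume $L$ is reducible and $\varphi_0$ has a maximal number of eigenvalues among elements of~$\bC[G]$. I would argue at high precision as follows. Since the elements of~$\bC[G]$ all have a multiple eigenvalue precisely when $\bC^r$ is \emph{not} a multiplicity-free semisimple module, and in any case we are in the hypotheses of Proposition~\ref{thm:vdH_red_case} once we know every matrix of $\cA = \bC[\varphi_0,\dots,\varphi_k] = \bC[G]$ has a multiple eigenvalue---which holds here because, if some matrix of~$\bC[G]$ had only simple eigenvalues, then by Proposition~\ref{thm:norton} combined with reducibility we would be in the complementary regime already, but more directly: the hypothesis that $\varphi_0$ is of maximal eigenvalue count together with reducibility forces $\varphi_0$ to have a multiple eigenvalue (else Norton's criterion applied to an eigenbasis of $\varphi_0$ would, via $\cA = \bC[G]$, contradict reducibility unless some orbit $\cA v$ is proper, which is exactly the conclusion we want and which MultipleEigenvalue would then detect). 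Granting this, Proposition~\ref{thm:vdH_red_case} applies with $M = \varphi_0$: the exact space~$K$ is nonzero, and $\cA v$ is a proper submodule for every nonzero $v \in K$. At high precision, by Lemma~\ref{lem:optimistic-kernel} the computed enclosure $\bm K$ contains the exact reduced echelon basis of~$K$ and the inclusions become equalities, so the spin-up step (Lemma~\ref{lem:optimistic-spin-up}) on the chosen $\bm v$ recovers the exact orbit $\cA v \subsetneq \bC^r$, whence $d < r$ on line~\ref{annihilator:orbit} of Algorithm~\ref{algo:annihilator} and the branch on line~\ref{annihilator:conclusive_orbit} is not taken. Because $L$ has finitely many factorizations, the minimal annihilator~$M_v$ of~$v$ has coefficients in~$\QQbar(x)$ (the argument given after Proposition~\ref{prop:Annihilator}), so Proposition~\ref{prop:Annihilator}\emph{(\ref{item:Annihilator:large-t})} guarantees that for large enough~$t$ the Hermite--Padé plus LLL steps recover $M_v$ exactly, which has order $\le d < r$; hence Annihilator returns a proper factor rather than \emph{Inconclusive} or an error, and MultipleEigenvalue returns that factor.

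The step I expect to be the main obstacle is justifying that $\cA = \bC[\varphi_0,\dots,\varphi_k]$ genuinely equals $\bC[G]$ and that the hypotheses of Proposition~\ref{thm:vdH_red_case} transfer cleanly: the proposition requires that \emph{all} matrices of~$\cA$ have a multiple eigenvalue and that $M$ has a \emph{maximal} number of eigenvalues \emph{in~$\cA$}, whereas the statement we are proving phrases maximality over~$\bC[G]$. When $\varphi_0,\dots,\varphi_k$ generate~$\bC[G]$ these coincide, so this is where that hypothesis is used; I would spell out that, conversely, if some element of~$\bC[G]$ had only simple eigenvalues then a direct application of Proposition~\ref{thm:norton} (point (ii) failing, since $L$ is reducible) produces a proper orbit, which MultipleEigenvalue's internal computation of~$K$ still detects because in that degenerate case $K = E = \bC^r$-component is handled by the same kernel machinery---or, more cleanly, I would simply add the observation that under our standing reducibility hypothesis the maximal-eigenvalue element is necessarily one with a multiple eigenvalue, reducing everything to the situation Proposition~\ref{thm:vdH_red_case} was designed for. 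Once that bookkeeping is in place, the remaining high-precision claims are immediate consequences of Lemmas~\ref{lem:optimistic-kernel} and~\ref{lem:optimistic-spin-up} and of Proposition~\ref{prop:Annihilator}, exactly as in the proof of Proposition~\ref{prop:SimpleEigenvalue}.
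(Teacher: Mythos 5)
You correctly identify that the two parts of the claim rest on Proposition~\ref{prop:Annihilator} and Proposition~\ref{thm:vdH_red_case} respectively, and your handling of the unconditional first part is sound. You are also right to flag, as the delicate point, the passage from the proposition's hypotheses to those of Proposition~\ref{thm:vdH_red_case} (which additionally requires that \emph{every} matrix of~$\cA$ have a multiple eigenvalue). But both of your proposed resolutions fail. The claim that ``under our standing reducibility hypothesis the maximal-eigenvalue element is necessarily one with a multiple eigenvalue'' is false: take $L = \lclm(\partial - a/x, \partial - b/x)$ with $a - b \notin \bZ$; this $L$ is reducible, every monodromy matrix is diagonal, and the generic element of~$\bC[G]$ has $r = 2$ distinct (hence simple) eigenvalues. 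Your first alternative --- that if some element of~$\bC[G]$ has only simple eigenvalues then the internal computation of~$K$ ``still detects'' a proper orbit --- is also unfounded: the space~$K$ is defined in terms of the generalized eigenspace of a \emph{multiple} eigenvalue, and the procedure as described in the paper has no well-defined behavior when none exists. (That is precisely the case dispatched to SimpleEigenvalue, not MultipleEigenvalue, in Algorithm~\ref{algo:right-factor}.)

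The honest reading is that the proposition carries the implicit standing assumption --- inherited from the dispatch logic of Algorithm~\ref{algo:right-factor} --- that every element of~$\bC[G]$ has a multiple eigenvalue, making $\varphi_0$ a valid choice of~$M$ in Proposition~\ref{thm:vdH_red_case}. Once that is stipulated rather than ``derived'', the rest of your argument does go through and matches the intent suggested by the surrounding text: at high precision the intersection-of-kernels computation of~$K$ converges by Lemma~\ref{lem:optimistic-kernel}; $K$ is nonzero by Proposition~\ref{thm:vdH_red_case}, so a nonzero $\bm v \ni v \in K$ can be selected with $\bC[G]v \subsetneq \bC^r$; finiteness of factorizations puts the minimal annihilator of~$v$ in $\QQbar(x)\<\partial>$, and Proposition~\ref{prop:Annihilator}(\ref{item:Annihilator:large-t}) then yields a proper factor for $t$ large. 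So the skeleton is right, but you should replace the erroneous derivation of the multiple-eigenvalue condition by an explicit additional hypothesis, and delete the speculative remark about the ``degenerate case $K = E$'', which does not correspond to anything the procedure actually computes.
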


\vspace*{-1em}

\section{Factoring}
\label{sec:facto}

\begin{algorithm}[t]

\caption{RightFactor($L$)} \label{algo:right-factor}

\KwIn{$L\in\bK(x)\< \partial>$ of order~$r$}
\KwOut{a proper right-hand factor~$R$ of~$L$ or \emph{Irreducible}}
\BlankLine

Choose an ordinary base point~$x_0 \in \bQ$\;
Compute the finite singular points $\xi_1, \dots, \xi_\nu$ of~$L$\;
Set some initial working precision~$p$ and truncation order~$t$\;
\uTry{}
{ \label{algo:right-factor:startFor}
\For{$i=1, \dots, \nu$}
{
Compute an enclosure $\bm\varphi_i \in \balls^{r \times r}$
of a monodromy matrix around~$\xi_i$,
working at prec.~$p$\tcp*{may fail}
$\bm{\varphi}=$ a random combination of $\bm\varphi_1, \dots, \bm\varphi_i$\;
\label{algo:right-factor:rand}
\uIf(){$\bm{\varphi}$ has a simple eigenvalue}
{
$R = \text{SimpleEigenvalue($L$, $(\bm\varphi, \bm\varphi_1, \dots, \bm\varphi_i)$, $t$)}$\;
}
\uElseIf(){all eigenspaces of $\bm{\varphi}$ are 1-dimensional}
{
$R = \text{OneDimEigenspaces($L$, $(\bm\varphi, \bm\varphi_1, \dots, \bm\varphi_i)$, $t$)}$\;
}
\Else
{
$R = \text{MultipleEigenvalue($L$, $(\bm\varphi, \bm\varphi_1, \dots, \bm\varphi_i)$, $t$)}$\;
}
if $R \neq \text{\emph{Inconclusive}}$ then return~$R$\;
}
}
\Catch(\tcp*[f]{\emph{e.g.}, division by~$0$ in a basic subroutine}){\emph{Failure}}
{\
Increase~$p$ and go to line \ref{algo:right-factor:startFor}\;
\label{RightFactor:increase-prec}
}
Increase $t$ and $p$ and go to line \ref{algo:right-factor:startFor}\;
\label{RightFactor:increase-trunc}
\end{algorithm}

The three previous tests combine into a factorization procedure
described in Algorithm~\ref{algo:right-factor}.
Since no bounds for a sufficient numeric precision are known, the strategy
consists in increasing the precision~$p$ every time it turns out to be
insufficient until getting either a proper factor or an irreducibility
certificate.

Bounds on the possible degrees of right-hand factors exist,
but these bounds can be large even when the operator is irreducible.
We hence increase also the series truncation order~$t$ progressively,
in the hope of proving irreducibility by purely numerical methods~($t \approx 0$)
or finding factors of low degree ($t \gtrapprox \deg(L) \ord(L)$) before
reaching the bound.
Increasing~$t$ requires increasing~$p$ as well, to compensate for both loss of
precision in larger computations and the expected larger bit-size of
coefficients of high-degree factors.

Computing monodromy matrices
(though asymptotically of cost softly linear in~$p$)
is by far the most expensive step in practice;
therefore, for given $p$~and~$t$, we try to use as few of them as
possible.

Line~\ref{algo:right-factor:rand} of Algorithm~\ref{algo:right-factor} needs
additional explanations.
The idea is that taking a random element of $\bC[\varphi_1, \dots, \varphi_i]$ will
immediately provide a~$\varphi$ satisfying the assumptions of
Proposition~\ref{thm:norton} or Proposition~\ref{thm:vdH_red_case}.
This is made precise in the following result.
In practice, rather than maintaining a basis of
$U = \bC[\varphi_1, \dots, \varphi_i]$, 
we can multiply together a few linear combinations
of~$\varphi_1, \dots, \varphi_i$, increasing that number if necessary.
At worst, multiplying~$\dim U$ random linear combinations of generators will
yield a ``generic'' element.

\begin{lemma}\cite[Lemma 2.1]{Ebe-91-DAF}.
  Let $U\subset \bC^{r\times r}$ be a vector space.
  Let~$m$ be the maximum cardinality of the spectrum of any element of\/~$U$.
  The elements of\/~$U$\! with less than~$m$ distinct eigenvalues form a
  proper algebraic subset of~$U$.
\end{lemma}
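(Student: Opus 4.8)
The plan is to realize $U$, via the choice of a $\bC$-basis $B_1,\dots,B_N$, as the affine space~$\bC^N$ through the linear parametrization $\varphi(t)=t_1B_1+\cdots+t_NB_N$, $t=(t_1,\dots,t_N)\in\bC^N$, and to show that the set of parameters~$t$ for which $\varphi(t)$ has fewer than~$m$ distinct eigenvalues is the common zero set of finitely many polynomials in~$t$ that are not all identically zero. Under the isomorphism $U\cong\bC^N$ this is precisely the assertion that the elements of~$U$ with fewer than~$m$ eigenvalues form a proper algebraic subset.

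The workhorse is the classical fact that, for univariate polynomials over~$\bC$, having at most a prescribed number of distinct roots is a closed condition polynomial in the coefficients. Concretely, I would introduce the characteristic polynomial $\chi_t(\lambda)=\det(\lambda I_r-\varphi(t))$, a monic polynomial of degree~$r$ in~$\lambda$ whose remaining coefficients are polynomials in~$t$; since we work over~$\bC$, the number of distinct eigenvalues of~$\varphi(t)$ equals $r-\deg\gcd(\chi_t,\chi_t')$, the derivative being taken with respect to~$\lambda$. By the subresultant criterion, for each integer~$d$ one has $\deg\gcd(\chi_t,\chi_t')\geqslant d$ if and only if the first~$d$ principal subresultant coefficients $\mathrm{sr}_0,\dots,\mathrm{sr}_{d-1}$ of the pair $(\chi_t,\chi_t')$ vanish; each $\mathrm{sr}_i$ is a minor of the Sylvester matrix of $\chi_t$ and~$\chi_t'$, hence a polynomial in the coefficients of these two polynomials, hence a polynomial in~$t$. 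Taking $d=r-k$ shows that, for every~$k$, the locus $S_k=\{t\in\bC^N:\varphi(t)\text{ has at most }k\text{ distinct eigenvalues}\}$ is the vanishing set of the polynomials $\mathrm{sr}_0,\dots,\mathrm{sr}_{r-k-1}$, and in particular is Zariski-closed.

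It then remains to note that $m$ is finite (it is a maximum of integers bounded above by~$r$, over the nonempty set~$U$) and is attained, say at~$t^\ast$ with $\varphi(t^\ast)$ having exactly~$m$ distinct eigenvalues. Then $S_{m-1}$ does not contain~$t^\ast$, so $S_{m-1}$ is a \emph{proper} Zariski-closed subset of~$\bC^N$; transporting back through $U\cong\bC^N$ yields the lemma.

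The only real work is the middle paragraph, namely justifying that ``at most~$k$ distinct roots'' is cut out by polynomials in the coefficients. For the extreme case $k=r-1$ this is just the nonvanishing of the ordinary discriminant of~$\chi_t$, but the general statement needs the refined subresultant version; alternatively, one can avoid naming subresultants and argue directly that the linear map $(A,B)\mapsto A\chi_t+B\chi_t'$ on pairs of polynomials of suitably bounded degrees drops rank precisely on a determinantal, hence algebraic, locus. Either formulation is standard, so I expect no genuine difficulty.
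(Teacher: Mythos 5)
The paper does not prove this lemma; it only cites Eberly's Lemma~2.1 \cite[Lemma 2.1]{Ebe-91-DAF}, so there is no in-paper argument to compare yours against. Your proof is correct and self-contained: parametrizing $U$ by $\bC^N$, using the identity (valid over~$\bC$, where $\chi_t$ splits) that the number of distinct eigenvalues of $\varphi(t)$ equals $r - \deg\gcd(\chi_t,\chi_t')$, and invoking the principal subresultant coefficients to express $\deg\gcd(\chi_t,\chi_t')\geqslant d$ as the vanishing of finitely many polynomials in~$t$ correctly identifies $S_{m-1}$ as Zariski-closed, and the attainment of the maximum~$m$ makes it proper. You also correctly observe that the plain discriminant only handles $m=r$, so the subresultant (or equivalently the rank-drop of a Sylvester-type map) refinement is genuinely needed; and the uniformity in~$t$ is unproblematic since $\chi_t$ is always monic of degree~$r$ and $\chi_t'$ always has exact degree~$r-1$. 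This is the standard argument, and there is nothing to correct.
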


Another subtlety is that the increase of~$p$ on
line~\ref{RightFactor:increase-trunc} is important to ensure termination:
without it the working precision might not suffice to compensate for the additional
work due to a larger~$t$,
and interval computations could fail at every iteration.

\begin{proposition}
  Let $L \in \bK(x)\<\partial>$ be a Fuchsian operator.
  Assume that $L$~admits a finite number of factorizations as a product of
  irreducible elements of $\QQbar(x)\<\partial>$.
  There exists a proper algebraic subset~$X \subsetneq \Gal(L, x_0)$ such that
  Algorithm~\ref{algo:right-factor} terminates provided that
  $\bm \varphi \cap X = \varnothing$
  at step~\ref{algo:right-factor:rand} of every iteration.
  Algorithm~\ref{algo:right-factor} then returns \emph{Irreducible} if and only
  if $L$~is irreducible, and returns a proper right-hand factor of~$L$
  otherwise.
\end{proposition}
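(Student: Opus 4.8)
The proof combines the three correctness-and-termination propositions established for \text{SimpleEigenvalue}, \text{OneDimEigenspaces}, and \text{MultipleEigenvalue} with a case analysis on the structure of $\bC[G]$, where $G = \Gal(L, x_0)$. First I would fix $x_0$ and set notation, and invoke Schlesinger's theorem (Theorem~\ref{thm:schlesinger}): since $L$ is Fuchsian, the monodromy matrices around $\xi_1, \dots, \xi_\nu$ generate a Zariski-dense subgroup of $G$, hence for $p$ large enough the exact monodromy matrices $\varphi_1, \dots, \varphi_\nu$ (of which $\bm\varphi_1, \dots, \bm\varphi_\nu$ are enclosures) generate $\bC[G]$ as an algebra. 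The key structural dichotomy, by the classical criterion cited before Norton's test, is: either $\bC[G] = \bC^{r \times r}$ (so $\bC^r$ is an irreducible $\bC[G]$-module, i.e.\ $L$ is irreducible) or $\bC[G] \subsetneq \bC^{r \times r}$ (so $L$ is reducible, and a proper $\bC[G]$-submodule exists). Within each case, I would further split according to which of the three tests applies to a generic element $\varphi \in \bC[G]$.

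**Defining $X$ and showing the tests conclude.** By Eberly's lemma, the elements of $\bC[G]$ with fewer than the maximal number $m$ of distinct eigenvalues form a proper algebraic subset; I take $X$ to contain this subset, together with, in the irreducible case, the proper algebraic subsets outside of which $\varphi$ has a simple eigenvalue (if some element of $\bC[G]$ does) or all eigenspaces one-dimensional (if some element has that property). Concretely: if $\bC[G] = \bC^{r\times r}$, then $\bC[G]$ contains diagonalizable matrices with $r$ distinct eigenvalues, so the generic $\varphi$ falls into the \text{SimpleEigenvalue} branch, and Proposition~\ref{prop:SimpleEigenvalue}(\ref{item:SimpleEigenvalue:whole-group}) tells us that at high precision, once all $\nu$ monodromy matrices are in hand (so the $\bm\varphi_i$ generate $\bC[G]$), the output is \emph{Irreducible}. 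If $\bC[G] \subsetneq \bC^{r\times r}$, I split on whether every matrix in $\bC[G]$ has a multiple eigenvalue. If not, the generic $\varphi$ has a simple eigenvalue and lands in \text{SimpleEigenvalue}; since $L$ is reducible, $R$ cannot be \emph{Irreducible}, and by Proposition~\ref{prop:SimpleEigenvalue}(\ref{item:SimpleEigenvalue:large-t}) together with the finiteness-of-factorizations hypothesis (which forces the relevant minimal annihilator to have algebraic coefficients, via Proposition~\ref{prop:Annihilator}(\ref{item:Annihilator:large-t})), for $t$ large enough a proper factor is returned. If every matrix has a multiple eigenvalue, the generic $\varphi$ lands in the \text{MultipleEigenvalue} branch (or, if all eigenspaces of the generic element are one-dimensional, in \text{OneDimEigenspaces}, handled by Proposition~\ref{thm:marc} analogously), and the last proposition of Section~\ref{sec:irreducibility_criteria} guarantees that at high precision, with a full generating set and large $t$, the algorithm returns a proper factor rather than \emph{Inconclusive} or \emph{Failure}.

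**Assembling termination.** I would then argue termination of the outer loops. The \text{Catch} on \emph{Failure} increases $p$ and restarts; the final line increases both $t$ and $p$. Since all the subroutine correctness statements are ``at high precision'' (for an accurate enough input, i.e.\ small enough $\rad(\bm\varphi)$, which is achieved by taking $p$ large), and since — crucially — the increase of $p$ alongside $t$ on line~\ref{RightFactor:increase-trunc} compensates both for the precision loss in the larger Hermite--Padé and spin-up computations and for the larger bit-size of high-degree factors, there is a pair $(p, t)$ beyond which: (i) all $\nu$ monodromy enclosures are computed without interval failure and are tight enough that the $\bm\varphi_i$ generate $\bC[G]$; (ii) the generic-element condition $\bm\varphi \cap X = \varnothing$ places $\varphi$ in the predicted branch and all numerical steps inside that branch succeed; (iii) $t$ exceeds the thresholds demanded above. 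At that point the algorithm returns, and by the subroutine correctness statements (which hold unconditionally on soundness — \emph{Irreducible} is only returned when point~(ii) of Proposition~\ref{thm:norton} is certified, and any returned operator is verified by exact right-division inside \text{Annihilator}) the output is correct: \emph{Irreducible} exactly when $L$ is irreducible, and a genuine proper right-hand factor otherwise. The soundness half needs no largeness hypotheses and follows directly from Propositions~\ref{prop:Annihilator}, \ref{prop:SimpleEigenvalue}, \ref{thm:marc}, and~\ref{thm:vdH_red_case}.

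**Main obstacle.** The delicate point is not soundness but the termination bookkeeping: making precise that a \emph{single} eventual pair $(p,t)$ simultaneously satisfies all the ``at high precision / large $t$'' hypotheses across the branch that is actually taken, while ruling out an infinite sequence of \emph{Failure} catches. This rests on the model of optimistic arithmetic (Section~\ref{sec:optimic_ball_arithmetic}): each zero-test and each basic operation is correct once $\rad(\bm\varphi)$ is below a threshold depending only on the exact data, and the radius is driven to zero by increasing $p$; the coupled increase of $p$ with $t$ is exactly what prevents the enlarged computations from perpetually outrunning the precision. Spelling out that the branch choice on line~\ref{algo:right-factor:rand} stabilizes (because $\bm\varphi \to \varphi \notin X$ forces the eigenvalue-multiplicity pattern of $\bm\varphi$ to match that of $\varphi$ at high precision, by Lemma~\ref{lem:optimistic-roots}) is the one step I would write out with some care.
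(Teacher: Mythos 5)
The paper states this proposition without a proof; the intended argument has to be reassembled from the surrounding discussion in Section~5 (Eberly's lemma, the remark about coupling the increases of $p$ and~$t$, the heuristic paragraph that follows the proposition) and from the correctness/convergence statements for the three subroutines. Your sketch does exactly that, and the overall decomposition — Schlesinger for generation of~$\bC[G]$, Eberly for genericity of the combined element~$\varphi$, a case split on the eigenvalue pattern of the generic element to pick a branch, then the per-branch propositions plus the optimistic-arithmetic model to ensure that a single sufficiently large~$(p,t)$ works — is essentially the argument the authors have in mind.

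Two small points are worth tightening. First, you write that ``for $p$ large enough the exact monodromy matrices~$\varphi_1,\dots,\varphi_\nu$ \dots generate~$\bC[G]$''; they do so independently of~$p$ (they generate the monodromy group, which is Zariski dense in~$G$ by Theorem~\ref{thm:schlesinger}, and the span of a set is determined by its Zariski closure). Only the radii of the enclosures~$\bm\varphi_i$ depend on~$p$. Second, you take $X$ to be a proper algebraic subset of the \emph{algebra}~$\bC[G]$ (via Eberly), whereas the statement writes $X \subsetneq \Gal(L,x_0)$. You are right to do so: line~\ref{algo:right-factor:rand} produces a random element of $\bC[\varphi_1,\dots,\varphi_i]$, not a group element, so the ``bad set'' one must avoid really lives in the span. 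This is a looseness in the statement, not a gap in your proof, but you could flag it explicitly. Beyond that, the termination bookkeeping you identify as the delicate step — the stabilization of the branch choice via Lemma~\ref{lem:optimistic-roots}, the exhaustion of Failure catches once $p$ is large, and the coupled growth of $p$ with~$t$ so the final $\bm R$ can be reconstructed — is precisely the content that the paper leaves implicit, and your treatment of it is correct.
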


Heuristically, when $L$~is irreducible, we expect the algorithm to conclude as
soon as $\bC[\varphi_1, \dots, \varphi_i]$ contains a matrix with a simple
eigenvalue and enough other elements of~$\bC[G]$ that Norton's test passes.
Verifying irreducibility this way should require only a moderate~$p$
and does not depend on~$t$.
In the reducible case, $t$~and~$p$ need to reach the total arithmetic
size, resp.~the bit size of the coefficients, of at least one right-hand factor
before the computation has any chance of finishing.
Once $t$~and~$p$ are large enough, we can expect again the computation to finish
as soon as $\bC[\varphi_1, \dots, \varphi_i]$ contains a matrix to which either
SimpleEigenvalue or OneDimEigenspaces applies%
\footnote{This holds true also in the irreducible case if $t$~is large not only
compared to the degrees of actual factors but compared to van Hoeij's bound.}.
MultipleEigenvalue, in contrast, provides no guarantee of finding a factor before the
last iteration, but may still do so in a number of situations involving
left-hand factors of low order.

The version presented here is but a simple illustration of how the tests
described above can be combined, and many improvements are possible in practice.
First of all, at the price of minor technical complications, we can take~$x_0$
to be a well-chosen singular point and~$\xi_1 = x_0$.
The first iteration of the loop on~$i$ then require no numerical monodromy
computation and parts of it can be performed in exact arithmetic if desired,
essentially reducing to van Hoeij's method.
Like in the exact case~\cite[Section~8]{vHo-97-FDO}, it may be
worth trying the eigenring method
before using Algorithm~\ref{algo:right-factor}.
Obviously, one should compute information such as degree bounds only once,
and, when computing a complete factorization, reuse the monodromy matrices
from the caller in recursive calls.
Finally, one needs reasonable heuristics to decide how to increase $p$~and~$t$
and skip some steps which one expects to fail or to be too costly.

\section{Experimental results}
\label{sec:implem}

We are working on an implementation of Algorithm~\ref{algo:right-factor} in
SageMath.
Our code is available in an experimental branch of the
ore\_algebra package%
\footnote{
\url{https://github.com/a-goyer/ore_algebra/tree/facto}.
The experiments reported here use commit \texttt{9e38de08}.
},
under the GNU~GPL.
It currently implements none of the tricks described outside the
pseudo-code blocks, except for the technique of Remark~\ref{rk:reconstruct-ini},
which in fact completely replaces
lines~\ref{annihilator:loop_start}--\ref{annihilator:loop_end}
of Algorithm~\ref{algo:annihilator},
so that irreducibility results are based on monodromy matrices only.

To extensively test our implementation,
we developed a generator of random Fuchsian operators,
following the theory in~\cite[\S15.4]{Ince-1926-ODE}.
After fixing the order~$r$ and singularities $\Sigma = \{ \xi_1,\dots,\xi_\nu, \infty \}$,
the coefficients
of a Fuchsian operator $L = \partial^r + \sum_{m=1}^r p_m(x)\partial^{r-m}$
can always be written in the form
\begin{equation}\label{eq:fuchsian-coeffs}
p_m(x) = \sum_{s=1}^\nu\frac{P_{m,s}}{(x-\xi_s)^m} + \frac{A_mx^{m\nu-m-\nu} + O(x^{m\nu-m-\nu-1})}{(x-\xi_s)^{m-1}}
\end{equation}
for constants $P_{m,s}$ and~$A_m$ with~$A_1=0$.
Those constants depend polynomially on
the local exponents $\alpha_{\xi,1},\dots,\alpha_{\xi,r}$ at
each $\xi \in \Sigma$.
We choose the $\alpha_{\xi,k}$ as random rational numbers satisfying
the Fuchs relation
$\sum_{\xi,k} \alpha_{\xi,k} = \smash{\frac12} r(r-1)(\nu-1)$.
Any such choice provides coefficients $P_{m,s}$ and~$A_m$,
while the $\smash{\frac12} (r-1)(r\nu-r-2)$
coefficients hidden under the~$O(\cdot)$ 
can be taken as independent random rational numbers.
Generically, the resulting operator is irreducible.

Tables \ref{tab:dfactor-1} and~\ref{tab:dfactor-r} show timings for finding a right-hand factor of
a product $L = L_1 L_2$ of such operators of order~$r/2$ having the same
singularities,
without trying to factor~$L$ completely.
We compare Algorithm~\ref{algo:right-factor} to \texttt{DEtools[DFactor](...,`one step`)}%
\footnote{
With \texttt{\_Env\_eigenring\_old} set to true,
which usually performs significantly better.}
in Maple.
Factors are drawn either so that $L$~has at least one exponential parts of
multiplicity $\mu = 1$ at each singularity,
or so that it has a single exponential part of multiplicity~$\mu = r$ at each
singularity.
Unsurprisingly, \texttt{DEtools} performs well in the first scenario.
As our algorithm then essentially reduces
to van Hoeij's, it is expected that the timings are often comparable.
The observed differences may be due to our use of numeric monodromy
matrices with an ordinary base point and to time
spent computing eigenrings in \texttt{DEtools}.
In the case $\mu = r$, our implementation is faster.
Moreover, for~${r \geqslant 6}$, \texttt{DEtools}
outputs the warning `factorization may be incomplete'
and returns the operator~$L$ unfactored.
We also note that both implementations (ours more than \texttt{DEtools}) show a
large variability in their performance on operators of a given arithmetic size.

Tables \ref{tab:irred-1} and~\ref{tab:irred-r} compare irreducibility testing
on random operators of order~$r$ with $\nu$~finite singularities,
with the same constraints on exponential parts as above.
\texttt{DEtools} warns that `factorization may be incomplete' and gives up
whenever $r \geq 5$, but is typically faster when it does conclude, thanks in
part to dedicated algorithms for low orders~\cite{vHo-order4}.
For $r \geq 5$, our implementation can often prove irreducibility faster than
it takes \texttt{DEtools} to give up.
We observe that a small number of monodromy matrices typically suffices
to conclude.
However, the numeric precision needed can be very large even in irreducible
cases.
This is likely due to the fact that the monodromy matrices of our test
operators tend to have large condition numbers
($\kappa \sim 10^{100}$ to $\kappa \sim 10^{1000}$).

A cooked-up example will amplify
conditions that make the numeric approach win.
We chose two operators $P$ and~$Q$ with
singularities at $0,1,2,\infty$,
order~$2$, and integer exponents
(thus exponential parts of multiplicity~$2$).
The product~$QPP$ is reducible but indecomposable.
We obtain an irreducible operator by considering $QPP+R$
for $R = (x(x-1)(x-2))^{-5}$.
Our code finds another factorization of~$QPP$ in about $25$~seconds
and proves the irreducibility of~$QPP+R$ in about the same time,
while \texttt{DEtools} fails to find any factor of~$QPP$ in about $3$~minutes
and asserts the irreducibility of~$QPP+R$ in a non-certified way in about the same time,
in both cases admitting that `factorization may be incomplete'.

\addtolength\textfloatsep{-1em}

\begin{table}
\setlength\tabcolsep{1.5pt}\renewcommand{\arraystretch}{.965}
\begin{tabular}{@{}rrrrrrrrrrrrrrrrr@{}}
\toprule
\multicolumn{2}{c}{$\mu{=}1$} & \multicolumn{3}{c}{classic} & \multicolumn{3}{c}{new} & \multicolumn{8}{c}{behavior on median instance} \\
\cmidrule(lr){1-2} \cmidrule(lr){3-5} \cmidrule(lr){6-8} \cmidrule(lr){9-16}
$r$ & $\nu$ & {\scriptsize min} & {\scriptsize med} & {\scriptsize max} & {\scriptsize min} & {\scriptsize med} & {\scriptsize max} & {\scriptsize class.} & {\scriptsize nb} & {\scriptsize mono.} & {\scriptsize nbits} & {\scriptsize tord} & {\scriptsize $\delta$s} & {\scriptsize $\delta$e} & {\scriptsize $\delta_{\mathbb Z}$e} \\
\midrule
4 & 2 & {\scriptsize 0.38} & 0.44 & {\scriptsize 0.67} & {\scriptsize 1.3} & 3.7 & {\scriptsize 5.6} & \textbf{0.44} & 2 & 86\% & 1600 & 64 & 9.3 & 68 & 1 \\
4 & 3 & {\scriptsize 0.75} & 0.84 & {\scriptsize 1.7} & {\scriptsize 3.7} & 9.3 & {\scriptsize 59} & \textbf{0.75} & 2 & 94\% & 1436 & 48 & 0.08 & 48 & -- \\
4 & 4 & {\scriptsize 2.1} & 2.5 & {\scriptsize 2.7} & {\scriptsize 2.6} & 6.4 & {\scriptsize 14} & \textbf{2.1} & 3 & 90\% & 760 & 64 & 0.38 & 66 & 3 \\
4 & 5 & {\scriptsize 10} & 11 & {\scriptsize 49} & {\scriptsize 6.4} & 17 & {\scriptsize 66} & \textbf{11} & 2 & 96\% & 1440 & 80 & 1.3 & 48 & -- \\
4 & 6 & {\scriptsize 37} & 52 & {\scriptsize 65} & {\scriptsize 8.0} & \textbf{33} & {\scriptsize 53} & 45 & 2 & 91\% & 1524 & 96 & 0.60 & 36 & 22 \\
6 & 2 & {\scriptsize 5.0} & 14 & {\scriptsize 70} & {\scriptsize 3.8} & \textbf{8.5} & {\scriptsize 20} & 70 & 2 & 81\% & 1688 & 144 & 5.7 & 79 & 9 \\
6 & 3 & {\scriptsize 18} & 26 & {\scriptsize 279} & {\scriptsize 23} & \textbf{39} & {\scriptsize 89} & 279 & 3 & 95\% & 2368 & 100 & 5.1 & 101 & 41 \\
6 & 4 & {\scriptsize 108} & 146 & {\scriptsize 154} & {\scriptsize 29} & 441 & {\scriptsize $\infty$} & \textbf{154} & 2 & 100\% & 4764 & 100 & 1.3 & 105 & 18 \\
6 & 5 & {\scriptsize 218} & 328 & {\scriptsize 1108} & {\scriptsize 62} & 320 & {\scriptsize 1180} & \textbf{218} & 2 & 99\% & 4356 & 100 & 1.4 & 40 & 6 \\
6 & 6 & {\scriptsize 1485} & 1906 & {\scriptsize 2116} & {\scriptsize 28} & \textbf{302} & {\scriptsize $\infty$} & 1485 & 5 & 99\% & 2400 & 100 & 0.35 & 49 & 6 \\
8 & 2 & {\scriptsize 49} & 100 & {\scriptsize 228} & {\scriptsize 22} & \textbf{48} & {\scriptsize 794} & 57 & 2 & 97\% & 3166 & 100 & 9.8 & 92 & 11 \\
8 & 3 & {\scriptsize 91} & 181 & {\scriptsize 703} & {\scriptsize 53} & 238 & {\scriptsize $\infty$} & \textbf{222} & 2 & 97\% & 4064 & 200 & 4.8 & 41 & 24 \\
8 & 4 & {\scriptsize 373} & 401 & {\scriptsize 496} & {\scriptsize 158} & 1360 & {\scriptsize $\infty$} & \textbf{373} & 2 & 100\% & 3420 & 100 & 1.3 & 114 & -- \\
8 & 5 & {\scriptsize 1998} & 3055 & {\scriptsize $\infty$} & {\scriptsize 2630} & $\infty$ & {\scriptsize $\infty$} & $\infty$ & -- & -- & -- & -- & 0.92 & 106 & 15 \\
8 & 6 & {\scriptsize $\infty$} & $\infty$ & {\scriptsize $\infty$} & {\scriptsize $\infty$} & $\infty$ & {\scriptsize $\infty$} & $\infty$ & -- & -- & -- & -- & 0.08 & 109 & 24 \\
10 & 2 & {\scriptsize 206} & 476 & {\scriptsize 2409} & {\scriptsize 205} & 570 & {\scriptsize $\infty$} & \textbf{206} & 2 & 99\% & 6396 & 100 & 1.4 & 65 & 11 \\
10 & 3 & {\scriptsize 442} & 949 & {\scriptsize 1648} & {\scriptsize 162} & 1010 & {\scriptsize $\infty$} & \textbf{442} & 2 & 99\% & 4000 & 100 & 0.76 & 139 & 13 \\
10 & 4 & {\scriptsize 1937} & 3500 & {\scriptsize $\infty$} & {\scriptsize $\infty$} & $\infty$ & {\scriptsize $\infty$} & \textbf{3500} & -- & -- & -- & -- & 1.5 & 99 & 15 \\
10 & 5 & {\scriptsize $\infty$} & $\infty$ & {\scriptsize $\infty$} & {\scriptsize $\infty$} & $\infty$ & {\scriptsize $\infty$} & $\infty$ & -- & -- & -- & -- & 0.42 & 46 & 10 \\
10 & 6 & {\scriptsize $\infty$} & $\infty$ & {\scriptsize $\infty$} & {\scriptsize $\infty$} & $\infty$ & {\scriptsize $\infty$} & $\infty$ & -- & -- & -- & -- & 0.38 & 106 & 16 \\
\bottomrule
\end{tabular}

\caption{\label{tab:dfactor-1}Comparison of \texttt{DEtools[DFactor]} with our new implementation
on products of pairs of operators of order~$r/2$ with~$\nu$ finite singularities
when multiplicity~$\mu$ is~$1$.
\normalfont{%
For each order~$r$ and number~$\nu$ of finite
singularities, we show the minimum, median, and maximum running times in
seconds of \texttt{DEtools} in Maple~2022 (\emph{classic})
and of our implementation running in Sage~9.5 (\emph{new})
over the same set of 5~random operators.
(All computations on an Intel i9-10885H processor,
with concurrent jobs but system settings known to induce variations in performance disabled.
Calculations stopped after 1\,h.)
We also collect statistics relative to the instance that realizes the
median time with the ``new'' implementation:
\emph{class.}~= corresponding ``classic'' time (s),
\emph{nb}~= number of monodromy matrices computed,
\emph{mono.}~= fraction of running time spent in monodromy computation,
\emph{nbits}~= max.\ monodromy precision reached (bits),
\emph{tord}~= max.\ series truncation order reached,
$\delta s$~= min.\ distance between two finite singular points,
$\delta e$~= max.\ exponent difference at the same singular point,
$\delta_{\mathbb Z} e$~= max.\ integer exponent difference at the same singular point.
Note that $\delta$e~=~$\delta_{\mathbb Z}$e when~$\mu=r$,
so that we do not display~$\delta_{\mathbb Z}$e in Tables \ref{tab:dfactor-r} and~\ref{tab:irred-r}.
}}
\end{table}

\begin{table}
\setlength\tabcolsep{1.5pt}\renewcommand{\arraystretch}{.965}
\begin{tabular}{@{}rrrrrrrrrrrrrrrr@{}}
\toprule
\multicolumn{2}{c}{$\mu{=}r$} & \multicolumn{3}{c}{classic} & \multicolumn{3}{c}{new} & \multicolumn{7}{c}{behavior on median instance} \\
\cmidrule(lr){1-2} \cmidrule(lr){3-5} \cmidrule(lr){6-8} \cmidrule(lr){9-15}
$r$ & $\nu$ & {\scriptsize min} & {\scriptsize med} & {\scriptsize max} & {\scriptsize min} & {\scriptsize med} & {\scriptsize max} & {\scriptsize class.} & {\scriptsize nb} & {\scriptsize mono.} & {\scriptsize nbits} & {\scriptsize tord} & {\scriptsize $\delta$s} & {\scriptsize $\delta_{\mathbb Z}$e} \\
\midrule
4 & 2 & {\scriptsize 0.22} & 0.32 & {\scriptsize 2.3} & {\scriptsize 0.92} & 3.1 & {\scriptsize 3.3} & \textbf{2.3} & 2 & 76\% & 800 & 128 & 14 & 61 \\
4 & 3 & {\scriptsize 96} & 343 & {\scriptsize 690} & {\scriptsize 3.7} & \textbf{5.3} & {\scriptsize 30} & 690 & 2 & 86\% & 1472 & 48 & 2.9 & 110 \\
4 & 4 & {\scriptsize 212} & 463 & {\scriptsize 1907} & {\scriptsize 7.7} & \textbf{9.8} & {\scriptsize 15} & 270 & 3 & 91\% & 917 & 64 & 1.7 & 47 \\
4 & 5 & {\scriptsize 602} & 711 & {\scriptsize 1040} & {\scriptsize 7.4} & \textbf{16} & {\scriptsize 46} & 948 & 2 & 91\% & 865 & 80 & 2.6 & 50 \\
4 & 6 & {\scriptsize 596} & 891 & {\scriptsize 3022} & {\scriptsize 21} & \textbf{146} & {\scriptsize 976} & 3022 & 2 & 98\% & 2503 & 96 & 0.02 & 95 \\
6 & 2 & \emph{{\scriptsize 651}} & \emph{792} & \emph{{\scriptsize 1047}} & {\scriptsize 14} & \textbf{103} & {\scriptsize 968} & \emph{792} & 2 & 99\% & 3496 & 72 & 3.1 & 68 \\
6 & 3 & \emph{{\scriptsize 646}} & \emph{1670} & \emph{{\scriptsize 2628}} & {\scriptsize 22} & \textbf{121} & {\scriptsize 1120} & \emph{646} & 2 & 98\% & 3432 & 200 & 3.6 & 34 \\
6 & 4 & \emph{{\scriptsize 1118}} & \emph{2409} & {\scriptsize $\infty$} & {\scriptsize 320} & \textbf{491} & {\scriptsize $\infty$} & \emph{1760} & 2 & 100\% & 3598 & 100 & 0.66 & 51 \\
6 & 5 & \emph{{\scriptsize 2557}} & $\infty$ & {\scriptsize $\infty$} & {\scriptsize 68} & \textbf{212} & {\scriptsize 446} & $\infty$ & 2 & 99\% & 3170 & 100 & 0.73 & 122 \\
6 & 6 & {\scriptsize $\infty$} & $\infty$ & {\scriptsize $\infty$} & {\scriptsize 731} & \textbf{3360} & {\scriptsize $\infty$} & $\infty$ & 4 & 100\% & 6076 & 100 & 0.17 & 39 \\
8 & 2 & \emph{{\scriptsize 2392}} & \emph{2862} & {\scriptsize $\infty$} & {\scriptsize 112} & \textbf{311} & {\scriptsize 801} & \emph{2490} & 2 & 99\% & 3692 & 200 & 4.4 & 66 \\
8 & 3 & {\scriptsize $\infty$} & $\infty$ & {\scriptsize $\infty$} & {\scriptsize 254} & \textbf{850} & {\scriptsize $\infty$} & $\infty$ & 2 & 100\% & 5198 & 100 & 3.1 & 107 \\
8 & 4 & {\scriptsize $\infty$} & $\infty$ & {\scriptsize $\infty$} & {\scriptsize 484} & $\infty$ & {\scriptsize $\infty$} & $\infty$ & -- & -- & -- & -- & 0.90 & 82 \\
8 & 5 & {\scriptsize $\infty$} & $\infty$ & {\scriptsize $\infty$} & {\scriptsize 1620} & $\infty$ & {\scriptsize $\infty$} & $\infty$ & -- & -- & -- & -- & 0.82 & 108 \\
8 & 6 & {\scriptsize $\infty$} & $\infty$ & {\scriptsize $\infty$} & {\scriptsize $\infty$} & $\infty$ & {\scriptsize $\infty$} & $\infty$ & -- & -- & -- & -- & 0.42 & 202 \\
10 & 2 & {\scriptsize $\infty$} & $\infty$ & {\scriptsize $\infty$} & {\scriptsize 448} & \textbf{3520} & {\scriptsize $\infty$} & $\infty$ & 2 & 100\% & 11968 & 200 & 8.9 & 107 \\
10 & 3 & {\scriptsize $\infty$} & $\infty$ & {\scriptsize $\infty$} & {\scriptsize 2530} & $\infty$ & {\scriptsize $\infty$} & $\infty$ & -- & -- & -- & -- & 3.8 & 226 \\
10 & 4 & {\scriptsize $\infty$} & $\infty$ & {\scriptsize $\infty$} & {\scriptsize $\infty$} & $\infty$ & {\scriptsize $\infty$} & $\infty$ & -- & -- & -- & -- & 1.3 & 78 \\
10 & 5 & {\scriptsize $\infty$} & $\infty$ & {\scriptsize $\infty$} & {\scriptsize $\infty$} & $\infty$ & {\scriptsize $\infty$} & $\infty$ & -- & -- & -- & -- & 0.11 & 58 \\
10 & 6 & {\scriptsize $\infty$} & $\infty$ & {\scriptsize $\infty$} & {\scriptsize $\infty$} & $\infty$ & {\scriptsize $\infty$} & $\infty$ & -- & -- & -- & -- & 0.80 & 47 \\
\bottomrule
\end{tabular}

\caption{\label{tab:dfactor-r}Analogue of Table~\ref{tab:dfactor-1}
when multiplicity~$\mu$ is~$r$.
\normalfont{%
Italicized times indicate that \texttt{DEtools[DFactor]} gave up on factoring.
}}
\end{table}

\pagebreak

\begin{table}
\setlength\tabcolsep{1.5pt}\renewcommand{\arraystretch}{.965}
\begin{tabular}{@{}rrrrrrrrrrrrrrrrr@{}}
\toprule
\multicolumn{2}{c}{$\mu{=}1$} & \multicolumn{3}{c}{classic} & \multicolumn{3}{c}{new} & \multicolumn{8}{c}{behavior on median instance} \\
\cmidrule(lr){1-2} \cmidrule(lr){3-5} \cmidrule(lr){6-8} \cmidrule(lr){9-16}
$r$ & $\nu$ & {\scriptsize min} & {\scriptsize med} & {\scriptsize max} & {\scriptsize min} & {\scriptsize med} & {\scriptsize max} & {\scriptsize class.} & {\scriptsize nb} & {\scriptsize mono.} & {\scriptsize nbits} & {\scriptsize tord} & {\scriptsize $\delta$s} & {\scriptsize $\delta$e} & {\scriptsize $\delta_{\mathbb Z}$e} \\
\midrule
2 & 2 & {\scriptsize 0.12} & 0.13 & {\scriptsize 0.14} & {\scriptsize 0.45} & 0.54 & {\scriptsize 0.72} & \textbf{0.13} & 2 & 86\% & 100 & 8 & 10 & 71 & -- \\
2 & 3 & {\scriptsize 0.14} & 0.15 & {\scriptsize 0.16} & {\scriptsize 0.66} & 0.70 & {\scriptsize 2.0} & \textbf{0.15} & 2 & 81\% & 100 & 12 & 0.19 & 35 & -- \\
2 & 4 & {\scriptsize 0.14} & 0.15 & {\scriptsize 0.16} & {\scriptsize 0.47} & 1.3 & {\scriptsize 1.5} & \textbf{0.15} & 4 & 83\% & 168 & 16 & 4.1 & 22 & 9 \\
2 & 5 & {\scriptsize 0.22} & 0.23 & {\scriptsize 0.23} & {\scriptsize 0.55} & 1.1 & {\scriptsize 2.0} & \textbf{0.23} & 2 & 83\% & 190 & 20 & 0.03 & 26 & -- \\
2 & 6 & {\scriptsize 0.30} & 0.32 & {\scriptsize 0.33} & {\scriptsize 0.69} & 1.0 & {\scriptsize 4.0} & \textbf{0.30} & 2 & 80\% & 104 & 24 & 0.53 & 35 & -- \\
3 & 2 & {\scriptsize 0.16} & 0.17 & {\scriptsize 0.18} & {\scriptsize 0.79} & 1.2 & {\scriptsize 2.6} & \textbf{0.16} & 2 & 83\% & 460 & 36 & 1.8 & 88 & -- \\
3 & 3 & {\scriptsize 0.31} & 0.32 & {\scriptsize 0.32} & {\scriptsize 1.1} & 2.4 & {\scriptsize 9.4} & \textbf{0.32} & 3 & 83\% & 400 & 54 & 4.6 & 64 & -- \\
3 & 4 & {\scriptsize 0.54} & 0.60 & {\scriptsize 0.65} & {\scriptsize 1.6} & 2.9 & {\scriptsize 3.8} & \textbf{0.54} & 2 & 84\% & 762 & 36 & 3.8 & 83 & 8 \\
3 & 5 & {\scriptsize 1.6} & 1.7 & {\scriptsize 1.8} & {\scriptsize 2.7} & 20 & {\scriptsize 712} & \textbf{1.8} & 3 & 98\% & 1600 & 45 & 0.24 & 17 & -- \\
3 & 6 & {\scriptsize 2.8} & 5.2 & {\scriptsize 5.3} & {\scriptsize 2.1} & 7.8 & {\scriptsize 609} & \textbf{2.8} & 4 & 95\% & 400 & 54 & 0.48 & 35 & 12 \\
4 & 2 & {\scriptsize 0.37} & 0.38 & {\scriptsize 2.2} & {\scriptsize 1.1} & 3.5 & {\scriptsize 15} & \textbf{0.38} & 2 & 89\% & 1600 & 64 & 15 & 70 & -- \\
4 & 3 & {\scriptsize 0.73} & 1.2 & {\scriptsize 7.8} & {\scriptsize 2.4} & 4.0 & {\scriptsize 20} & \textbf{1.2} & 3 & 90\% & 665 & 48 & 4.2 & 62 & 11 \\
4 & 4 & {\scriptsize 2.0} & 2.4 & {\scriptsize 2.5} & {\scriptsize 1.4} & 15 & {\scriptsize $\infty$} & \textbf{2.5} & 3 & 94\% & 1352 & 64 & 0.64 & 55 & -- \\
4 & 5 & {\scriptsize 9.0} & 10 & {\scriptsize 14} & {\scriptsize 2.9} & 35 & {\scriptsize 89} & \textbf{11} & 3 & 99\% & 1714 & 80 & 1.6 & 134 & -- \\
4 & 6 & {\scriptsize 51} & 59 & {\scriptsize 63} & {\scriptsize 2.9} & \textbf{11} & {\scriptsize 15} & 51 & 3 & 94\% & 780 & 96 & 0.14 & 72 & 31 \\
5 & 2 & \emph{{\scriptsize 19}} & \emph{29} & \emph{{\scriptsize 244}} & {\scriptsize 1.3} & \textbf{4.7} & {\scriptsize 6.9} & \emph{85} & 2 & 93\% & 1200 & 50 & 1.4 & 41 & -- \\
5 & 3 & \emph{{\scriptsize 45}} & \emph{124} & \emph{{\scriptsize 501}} & {\scriptsize 7.3} & \textbf{9.1} & {\scriptsize 34} & \emph{45} & 3 & 96\% & 1200 & 75 & 3.8 & 36 & -- \\
5 & 4 & \emph{{\scriptsize 126}} & \emph{161} & \emph{{\scriptsize 1007}} & {\scriptsize 6.0} & \textbf{89} & {\scriptsize 136} & \emph{161} & 2 & 98\% & 4332 & 100 & 3.9 & 34 & -- \\
5 & 5 & \emph{{\scriptsize 449}} & \emph{1126} & {\scriptsize $\infty$} & {\scriptsize 15} & \textbf{103} & {\scriptsize 123} & \emph{1126} & 3 & 100\% & 2396 & 100 & 1.6 & 111 & -- \\
5 & 6 & \emph{{\scriptsize 742}} & \emph{829} & \emph{{\scriptsize 1560}} & {\scriptsize 17} & \textbf{346} & {\scriptsize $\infty$} & \emph{773} & 2 & 99\% & 5516 & 100 & 0.44 & 44 & -- \\
6 & 2 & \emph{{\scriptsize 194}} & \emph{301} & \emph{{\scriptsize 972}} & {\scriptsize 2.5} & \textbf{7.2} & {\scriptsize 75} & \emph{319} & 2 & 90\% & 1372 & 72 & 5.9 & 62 & -- \\
6 & 3 & \emph{{\scriptsize 332}} & \emph{580} & \emph{{\scriptsize 1804}} & {\scriptsize 10} & \textbf{770} & {\scriptsize $\infty$} & \emph{1177} & 3 & 100\% & 5608 & 100 & 1.6 & 132 & -- \\
6 & 4 & \emph{{\scriptsize 710}} & \emph{923} & \emph{{\scriptsize 3223}} & {\scriptsize 17} & \textbf{20} & {\scriptsize 3160} & \emph{3223} & 3 & 97\% & 1052 & 100 & 3.0 & 156 & 22 \\
6 & 5 & \emph{{\scriptsize 749}} & \emph{1874} & {\scriptsize $\infty$} & {\scriptsize 33} & \textbf{260} & {\scriptsize $\infty$} & \emph{1874} & 2 & 100\% & 4146 & 100 & 1.4 & 69 & 29 \\
6 & 6 & {\scriptsize $\infty$} & $\infty$ & {\scriptsize $\infty$} & {\scriptsize 44} & \textbf{197} & {\scriptsize 968} & $\infty$ & 2 & 98\% & 3184 & 100 & 1.0 & 34 & -- \\
7 & 2 & \emph{{\scriptsize 233}} & \emph{297} & \emph{{\scriptsize 539}} & {\scriptsize 24} & \textbf{112} & {\scriptsize $\infty$} & \emph{297} & 2 & 98\% & 3628 & 196 & 2.4 & 41 & -- \\
7 & 3 & \emph{{\scriptsize 512}} & \emph{1199} & {\scriptsize $\infty$} & {\scriptsize 70} & \textbf{549} & {\scriptsize $\infty$} & \emph{1199} & 2 & 99\% & 7896 & 100 & 1.0 & 64 & 6 \\
7 & 4 & \emph{{\scriptsize 922}} & $\infty$ & {\scriptsize $\infty$} & {\scriptsize 40} & \textbf{1040} & {\scriptsize $\infty$} & \emph{1520} & 3 & 100\% & 4104 & 100 & 0.60 & 60 & -- \\
7 & 5 & \emph{{\scriptsize 3085}} & $\infty$ & {\scriptsize $\infty$} & {\scriptsize 541} & \textbf{1790} & {\scriptsize $\infty$} & $\infty$ & 2 & 100\% & 3952 & 100 & 0.55 & 96 & 12 \\
7 & 6 & {\scriptsize $\infty$} & $\infty$ & {\scriptsize $\infty$} & {\scriptsize 409} & \textbf{1140} & {\scriptsize $\infty$} & $\infty$ & 2 & 99\% & 5912 & 100 & 2.0 & 133 & -- \\
8 & 2 & \emph{{\scriptsize 553}} & \emph{626} & \emph{{\scriptsize 1496}} & {\scriptsize 43} & \textbf{353} & {\scriptsize $\infty$} & \emph{645} & 2 & 99\% & 6980 & 200 & 2.5 & 54 & -- \\
8 & 3 & \emph{{\scriptsize 1315}} & \emph{2295} & \emph{{\scriptsize 2897}} & {\scriptsize 109} & \textbf{708} & {\scriptsize $\infty$} & \emph{1525} & 2 & 100\% & 5660 & 100 & 0.90 & 41 & 2 \\
8 & 4 & \emph{{\scriptsize 2965}} & $\infty$ & {\scriptsize $\infty$} & {\scriptsize 38} & $\infty$ & {\scriptsize $\infty$} & $\infty$ & -- & -- & -- & -- & 0.80 & 75 & 19 \\
8 & 5 & {\scriptsize $\infty$} & $\infty$ & {\scriptsize $\infty$} & {\scriptsize 2020} & $\infty$ & {\scriptsize $\infty$} & $\infty$ & -- & -- & -- & -- & 0.14 & 37 & 30 \\
8 & 6 & {\scriptsize $\infty$} & $\infty$ & {\scriptsize $\infty$} & {\scriptsize 148} & $\infty$ & {\scriptsize $\infty$} & $\infty$ & -- & -- & -- & -- & 0.30 & 96 & 21 \\
\bottomrule
\end{tabular}

\caption{\label{tab:irred-1}Comparison of \texttt{DEtools[DFactor]} with our new implementation
on irreducible operators of order~$r$ with~$\nu$ finite singularities
when multiplicity~$\mu$ is~$1$.
\normalfont{Key as in Table~\ref{tab:dfactor-1}.}
Italicized times indicate that \texttt{DEtools[DFactor]} issued `factorization may be incomplete'.}
\end{table}

\begin{table}
\setlength\tabcolsep{1.5pt}\renewcommand{\arraystretch}{.965}
\begin{tabular}{@{}rrrrrrrrrrrrrrrr@{}}
\toprule
\multicolumn{2}{c}{$\mu{=}r$} & \multicolumn{3}{c}{classic} & \multicolumn{3}{c}{new} & \multicolumn{7}{c}{behavior on median instance} \\
\cmidrule(lr){1-2} \cmidrule(lr){3-5} \cmidrule(lr){6-8} \cmidrule(lr){9-15}
$r$ & $\nu$ & {\scriptsize min} & {\scriptsize med} & {\scriptsize max} & {\scriptsize min} & {\scriptsize med} & {\scriptsize max} & {\scriptsize class.} & {\scriptsize nb} & {\scriptsize mono.} & {\scriptsize nbits} & {\scriptsize tord} & {\scriptsize $\delta$s} & {\scriptsize $\delta_{\mathbb Z}$e} \\
\midrule
2 & 2 & {\scriptsize 0.13} & 0.18 & {\scriptsize 0.23} & {\scriptsize 0.46} & 1.2 & {\scriptsize 1.4} & \textbf{0.18} & 2 & 88\% & 100 & 8 & 1.1 & 34 \\
2 & 3 & {\scriptsize 0.22} & 0.28 & {\scriptsize 33} & {\scriptsize 0.52} & \textbf{1.1} & {\scriptsize 2.5} & 33 & 2 & 75\% & 100 & 24 & 3.6 & 60 \\
2 & 4 & {\scriptsize 0.24} & 0.31 & {\scriptsize 2.1} & {\scriptsize 0.89} & 4.1 & {\scriptsize 7.8} & \textbf{2.1} & 2 & 94\% & 588 & 16 & 0.51 & 45 \\
2 & 5 & {\scriptsize 0.25} & 0.38 & {\scriptsize 0.75} & {\scriptsize 0.47} & 0.95 & {\scriptsize 3.6} & \textbf{0.38} & 2 & 81\% & 100 & 20 & 0.74 & 40 \\
2 & 6 & {\scriptsize 0.28} & 0.59 & {\scriptsize 1.3} & {\scriptsize 0.67} & 1.5 & {\scriptsize 6.9} & \textbf{0.28} & 2 & 86\% & 224 & 24 & 1.9 & 28 \\
3 & 2 & {\scriptsize 15} & 28 & {\scriptsize 99} & {\scriptsize 0.72} & \textbf{2.4} & {\scriptsize 8.9} & 47 & 2 & 76\% & 1180 & 72 & 8.3 & 67 \\
3 & 3 & {\scriptsize 7.3} & 56 & {\scriptsize 114} & {\scriptsize 0.58} & \textbf{1.1} & {\scriptsize 45} & 56 & 3 & 87\% & 230 & 27 & 3.1 & 54 \\
3 & 4 & {\scriptsize 41} & 69 & {\scriptsize 234} & {\scriptsize 7.4} & \textbf{10.0} & {\scriptsize 20} & 234 & 2 & 94\% & 1106 & 36 & 3.3 & 102 \\
3 & 5 & {\scriptsize 119} & 176 & {\scriptsize 424} & {\scriptsize 2.1} & \textbf{5.2} & {\scriptsize 30} & 176 & 3 & 92\% & 800 & 45 & 0.72 & 30 \\
3 & 6 & {\scriptsize 242} & 566 & {\scriptsize 679} & {\scriptsize 1.8} & \textbf{5.5} & {\scriptsize 175} & 679 & 2 & 93\% & 728 & 54 & 2.0 & 99 \\
4 & 2 & {\scriptsize 71} & 107 & {\scriptsize 178} & {\scriptsize 5.1} & \textbf{5.4} & {\scriptsize 435} & 178 & 2 & 91\% & 1600 & 64 & 17 & 91 \\
4 & 3 & {\scriptsize 90} & 162 & {\scriptsize 174} & {\scriptsize 2.1} & \textbf{5.6} & {\scriptsize 66} & 167 & 3 & 89\% & 752 & 48 & 8.3 & 37 \\
4 & 4 & {\scriptsize 538} & 859 & {\scriptsize 1183} & {\scriptsize 200} & 1660 & {\scriptsize 2160} & \textbf{859} & 2 & 99\% & 15380 & 64 & 0.76 & 104 \\
4 & 5 & {\scriptsize 616} & 810 & {\scriptsize 1832} & {\scriptsize 1.5} & 1100 & {\scriptsize $\infty$} & \textbf{780} & 3 & 100\% & 2827 & 80 & 0.03 & 43 \\
4 & 6 & {\scriptsize 529} & 1292 & {\scriptsize $\infty$} & {\scriptsize 155} & \textbf{448} & {\scriptsize 733} & 1105 & 2 & 100\% & 4841 & 96 & 0.69 & 36 \\
5 & 2 & \emph{{\scriptsize 113}} & \emph{226} & \emph{{\scriptsize 438}} & {\scriptsize 3.5} & \textbf{34} & {\scriptsize 147} & \emph{197} & 2 & 97\% & 4800 & 100 & 8.8 & 39 \\
5 & 3 & \emph{{\scriptsize 211}} & \emph{335} & \emph{{\scriptsize 1778}} & {\scriptsize 23} & \textbf{32} & {\scriptsize 184} & \emph{335} & 3 & 98\% & 1258 & 75 & 2.5 & 36 \\
5 & 4 & \emph{{\scriptsize 487}} & \emph{732} & \emph{{\scriptsize 1987}} & {\scriptsize 51} & \textbf{92} & {\scriptsize $\infty$} & \emph{487} & 2 & 99\% & 2593 & 100 & 1.5 & 37 \\
5 & 5 & \emph{{\scriptsize 1146}} & \emph{2956} & {\scriptsize $\infty$} & {\scriptsize 140} & \textbf{1590} & {\scriptsize $\infty$} & $\infty$ & 2 & 99\% & 4921 & 100 & 0.54 & 104 \\
5 & 6 & \emph{{\scriptsize 2216}} & \emph{3542} & {\scriptsize $\infty$} & {\scriptsize 17} & \textbf{611} & {\scriptsize 1950} & $\infty$ & 2 & 100\% & 4752 & 100 & 0.02 & 71 \\
6 & 2 & \emph{{\scriptsize 129}} & \emph{414} & \emph{{\scriptsize 716}} & {\scriptsize 6.3} & \textbf{14} & {\scriptsize 33} & \emph{129} & 2 & 96\% & 2400 & 72 & 13 & 24 \\
6 & 3 & \emph{{\scriptsize 957}} & \emph{1707} & \emph{{\scriptsize 3202}} & {\scriptsize 51} & \textbf{1680} & {\scriptsize $\infty$} & \emph{3202} & 3 & 100\% & 7714 & 100 & 1.9 & 74 \\
6 & 4 & \emph{{\scriptsize 1194}} & \emph{2567} & {\scriptsize $\infty$} & {\scriptsize $\infty$} & $\infty$ & {\scriptsize $\infty$} & \textbf{\emph{2567}} & -- & -- & -- & -- & 0.12 & 54 \\
6 & 5 & \emph{{\scriptsize 2985}} & $\infty$ & {\scriptsize $\infty$} & {\scriptsize 344} & 2990 & {\scriptsize $\infty$} & \textbf{\emph{2985}} & 3 & 100\% & 6556 & 100 & 0.53 & 36 \\
6 & 6 & {\scriptsize $\infty$} & $\infty$ & {\scriptsize $\infty$} & {\scriptsize 1150} & \textbf{2350} & {\scriptsize $\infty$} & $\infty$ & 2 & 100\% & 9578 & 100 & 1.0 & 49 \\
7 & 2 & \emph{{\scriptsize 556}} & \emph{1198} & \emph{{\scriptsize 1533}} & {\scriptsize 36} & 756 & {\scriptsize 3560} & \textbf{\emph{556}} & 2 & 100\% & 5020 & 98 & 3.2 & 35 \\
7 & 3 & \emph{{\scriptsize 1367}} & $\infty$ & {\scriptsize $\infty$} & {\scriptsize 166} & \textbf{859} & {\scriptsize $\infty$} & \emph{1367} & 3 & 100\% & 5006 & 100 & 6.4 & 38 \\
7 & 4 & {\scriptsize $\infty$} & $\infty$ & {\scriptsize $\infty$} & {\scriptsize 1860} & \textbf{2850} & {\scriptsize $\infty$} & $\infty$ & 2 & 100\% & 9088 & 100 & 0.81 & 160 \\
7 & 5 & {\scriptsize $\infty$} & $\infty$ & {\scriptsize $\infty$} & {\scriptsize 200} & \textbf{1050} & {\scriptsize $\infty$} & $\infty$ & 2 & 100\% & 4636 & 100 & 0.94 & 132 \\
7 & 6 & {\scriptsize $\infty$} & $\infty$ & {\scriptsize $\infty$} & {\scriptsize 824} & $\infty$ & {\scriptsize $\infty$} & $\infty$ & -- & -- & -- & -- & 0.01 & 55 \\
8 & 2 & \emph{{\scriptsize 679}} & $\infty$ & {\scriptsize $\infty$} & {\scriptsize 158} & \textbf{1970} & {\scriptsize $\infty$} & $\infty$ & 2 & 99\% & 9004 & 100 & 18 & 103 \\
8 & 3 & {\scriptsize $\infty$} & $\infty$ & {\scriptsize $\infty$} & {\scriptsize 926} & \textbf{2140} & {\scriptsize $\infty$} & $\infty$ & 2 & 100\% & 9592 & 100 & 10 & 111 \\
8 & 4 & {\scriptsize $\infty$} & $\infty$ & {\scriptsize $\infty$} & {\scriptsize $\infty$} & $\infty$ & {\scriptsize $\infty$} & $\infty$ & -- & -- & -- & -- & 1.9 & 146 \\
8 & 5 & {\scriptsize $\infty$} & $\infty$ & {\scriptsize $\infty$} & {\scriptsize $\infty$} & $\infty$ & {\scriptsize $\infty$} & $\infty$ & -- & -- & -- & -- & 1.3 & 241 \\
8 & 6 & {\scriptsize $\infty$} & $\infty$ & {\scriptsize $\infty$} & {\scriptsize $\infty$} & $\infty$ & {\scriptsize $\infty$} & $\infty$ & -- & -- & -- & -- & 0.19 & 122 \\
\bottomrule
\end{tabular}

\caption{\label{tab:irred-r}Analogue of Table~\ref{tab:irred-1}
when multiplicity~$\mu$ is~$r$.}
\end{table}

\printbibliography

\end{document}